\def\true#1{\llbracket #1 \rrbracket}
\def\vec#1{{\mathbf #1}}
\def\lev#1{\mathrm{lev}(#1)}
\newcommand{\inp}[2]{(\vec{#1}, \vec{#2})}
\newcommand{\conj}{\mathrm{CD}}
\newcommand{\disj}{\mathrm{DISJ}}
\newcommand{\sign}{\mathrm{sign}}
\newcommand{\gtop}{g^{\mathrm{clf}} }
\newcommand{\dom}{\{0, 1\}^{2n}}
\newcommand{\af}{\varphi}
\newcommand{\dis}{\delta}
\newcommand{\npr}{c}
\newcommand{\eq}{\mathrm{EQ}}
\newtheorem{fact}{Fact}
\title{Exponential Lower Bounds for Threshold Circuits of Sub-Linear Depth and Energy} %TODO Please add
\author{Kei Uchizawa}{Graduate School of Science and Engineering, Yamagata University, Jonan 4-3-16, Yonezawa-shi Yamagata, 992-8510 Japan.}{uchizawa@yz.yamagata-u.ac.jp}{}{}%TODO mandatory, please use full name; only 1 author per \author macro; first two parameters are mandatory, other parameters can be empty. Please provide at least the name of the affiliation and the country. The full address is optional
\author{Haruki Abe}{Graduate School of Science and Engineering, Yamagata University, Jonan 4-3-16, Yonezawa-shi Yamagata, 992-8510 Japan.}{}{}{}
\authorrunning{K. Uchizawa and H. Abe} %TODO mandatory. First: Use abbreviated first/middle names. Second (only in severe cases): Use first author plus 'et al.'
\keywords{Circuit complexity, disjointness function, eqaulity function, neural networks, threshold circuits, ReLU cicuits, sigmoid circuits, sprase activity} %TODO mandatory; please add comma-separated list of keywords
\begin{document}

\maketitle

%TODO mandatory: add short abstract of the document
\begin{abstract}
	In this paper, we investigate computational power of threshold circuits and other theoretical models of neural networks in terms of the following four complexity measures: size (the number of gates), depth, weight and energy. Here the energy complexity of a circuit  measures sparsity of their computation, and  is defined as the maximum number of gates outputting non-zero values taken over all the input assignments.
	
	As our main result, we prove that any threshold circuit $C$ of size $s$, depth $d$, energy $e$ and weight $w$ satisfies $\log (rk(M_C)) \le ed (\log s + \log w + \log n)$, where $rk(M_C)$ is the rank of the communication matrix $M_C$ of a $2n$-variable Boolean function that $C$ computes. Thus, such a threshold circuit $C$ is able to compute only a Boolean function of which communication matrix has rank bounded by a product of logarithmic factors of $s,w$ and linear factors of $d,e$. This implies an exponential lower bound on the size of even sublinear-depth threshold circuit if energy and weight are sufficiently small. For example, we can obtain an exponential lower bound $s = 2^{\Omega(n^{1/3})}$ even for threshold circuits of depth $n^{1/3}$, energy $n^{1/3}$ and weight $2^{o(n^{1/3})}$. We also show that the inequality is tight up to a constant factor when the depth $d$ and energy $e$ satisfies $ed = o(n/ \log n)$.
	
	For other models of neural networks such as a discretized ReLE circuits and decretized sigmoid circuits,
	%define similar notions of energy and weight for other types of circuits, and
	we prove that a similar inequality also holds for a discretized circuit $C$: $rk(M_C) = O(ed(\log s + \log w + \log n)^3)$. We obtain this inequality by showing that any discretized circuit can be simulated by a threshold circuit with moderate increase of size, depth, energy and weight. Thus, if we consider the number of non-zero output values as a measure for sparse activity of a neural network, our results suggest that larger depth linearly helps neural networks to acquire sparse activity.
\end{abstract}

\section{Introduction}

\noindent
{\bf Background}.
DiCarlo and Cox argued that constructing good internal representations is crucial to perform visual information processing, such as object recognition, for neural networks in the brain~\cite{DC07}. Here, an internal representation is described by a vector in a very high dimensional space, where each axis is one neuron's activity and the dimensionality equals to the number (e.g., $\sim$1 million) of neurons in a feedforward neural network. They call representations good if a given pair of two images that are hard to distinguish at the input space, but the resulting representations for them are easy to separate by simple classifiers such as a linear classifier. While such internal representations are likely to play fundamental role in information processing in the brain, it is also known that a neuron needs relatively high energy to be active~\cite{Lennie:03,Olshausen:04}, and hence neural networks are forced to acquire representations supported by only a small number of active neurons~\cite{Foeldiak:03}. These observations pose a question: for what information processing can neural networks construct good internal representations?
%add other papers for sparsity

In the paper~\cite{UDM:08}, Uchizawa \emph{et al.} address the question from the viewpoint of circuit complexity. More formally, they employed threshold circuits as a model of neural networks~\cite{MP43,Minsky:88,Parberry:94,R58,Siu:91,SiuETAL:95,Siu:94}, and introduced a complexity measure, called energy complexity, for sparsity of their internal representations. A threshold circuit is a feedforward logic circuit whose basic computational element computes a linear threshold function, and energy of a circuit is defined as the maximum number of internal gates outputting ones over all the input assignments. (See also ~\cite{DOS:20,K:92,SS:22,Sun19,V:61} for studies on energy complexity of other types of logic circuits).
Uchizawa \emph{et al.} then show that the energy complexity is closely related to the rank of linear decision trees. In particular, they prove that any linear decision tree of $l$ leaves can be simulated by a threshold circuit of size $O(l)$ and energy $O(\log l)$. Thus, even logarithmic-energy threshold circuits have certain computational power: any linear decision tree of polynomial number of leaves can be simulated by a polynomial-size and logarithmic-energy threshold circuit. 

Following the paper~\cite{UDM:08}, a sequence of papers show relations among other major complexity measures such as size (the number of gates), depth, weight and fan-in~\cite{MOSUZ18,Suzuki11,Suzuki13a,uchi:08,U20,Uchizawa14,Uchizawa:10:SET}. In particular, Uchizawa and Takimoto~\cite{uchi:08} showed that any threshold circuit $C$ of depth $d$ and energy $e$ requires size $s = 2^{\Omega(n/e^d)}$ if $C$ computes a high bounded-error communication complexity function such as Inner-Product function. Even for low communication complexity functions, an exponential lower bound on the size is known for constant-depth threshold circuits: any threshold circuit $C$ of depth $d$ and energy $e$ requires size $s = 2^{\Omega(n/e2^{e+d}\log^e n)}$ if $C$ computes the parity function~\cite{U20}. These results provide exponential lower bounds if the depth is constant and energy is sub-linear~\cite{uchi:08} or sub-logarithmic~\cite{U20}, while both Inner-Product function and Parity function are computable by linear-size, constant-depth, and linear energy threshold circuits. Thus these results imply that the energy complexity strongly related to representational power of threshold circuits. However these lower bounds break down when we consider threshold circuits of larger depth and energy, say, non-constant depth and sub-linear energy.

\smallskip
\noindent
{\bf Our Results for Threshold Circuits}.
In this paper, we prove that simple Boolean functions are hard even for sub-linear depth and sub-linear energy threshold circuits. Let $C$ be a threshold circuit with Boolean input variables $\vec{x} = (x_1, x_2, \dots , x_n)$ and $\vec{y} = (y_1, y_2, \dots , y_n)$. A communication matrix $M_C$ of $C$ is a $2^n \times 2^n$  matrix where each row (resp., each column) is indexed by an assignment $\vec{a} \in \{ 0, 1\}^n$ to $\vec{x}$ (resp., $\vec{b} \in \{ 0, 1\}^n$ to $\vec{y}$), and the value $M_C[\vec{a}, \vec{b}]$ is defined to be the output of $C$ given $\vec{a}$ and $\vec{b}$. We denote by $rk(M_C)$ the rank of $M_C$ over $\mathbb{F}_2$. Our main result is the following relation among size, depth energy and weight.
\begin{theorem}\label{thm:Intro}
Let $s, d, e$ and $w$ be integers satisfying $2 \le s, d$, $10 \le e$, $1 \le w$. If a threshold circuit $C$ computes a Boolean function of $2n$ variables, and has size $s$, depth $d$, energy $e$ and weight $w$, then it holds that
\begin{equation}\label{eq:Intro:LB}
	\log ( rk(M_C) ) \le ed(\log s + \log w + \log n).
\end{equation}
\end{theorem}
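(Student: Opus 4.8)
The plan is to simulate $C$ by a short deterministic two-party protocol, with Alice holding $\vec{a}$ and Bob holding $\vec{b}$, and then to bound $rk(M_C)$ by the number of leaves of that protocol. This suffices: the $1$-leaves of a deterministic protocol form pairwise disjoint combinatorial rectangles whose union is $\{(\vec{a},\vec{b}) : C(\vec{a},\vec{b}) = 1\}$, so their $0/1$ indicator matrices sum over $\mathbb{F}_2$ (they are disjoint) to $M_C$, and each has rank $1$; hence $rk(M_C)$ is at most the number of leaves. It is therefore enough to build a protocol with at most $(swn)^{ed}$ leaves (lower‑order additive terms being absorbed using the hypotheses $2 \le s,d$ and $10 \le e$), and then take logarithms to obtain \eqref{eq:Intro:LB}.

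To construct the protocol, I would evaluate the gates of $C$ level by level in $d$ rounds $L_1,\dots,L_d$, keeping as common knowledge exactly the set $P$ of gates that have fired so far; since $C$ has energy $e$, we always have $|P| \le e$. On entering round $\ell$, each gate $g \in L_\ell$ reads only circuit inputs and already‑evaluated gates, so the contribution to $g$'s weighted sum coming from the fired gates is computable by both players, and $g$ fires iff $\alpha_g(\vec{a}) + \beta_g(\vec{b}) \ge k_g$, where $\alpha_g$ and $\beta_g$ are the weighted sums of $g$ over Alice's and Bob's variables and $k_g$ is a threshold Bob can compute. As each weight has absolute value at most $w$ and $g$ reads at most $n$ of Alice's and at most $n$ of Bob's variables, $\alpha_g(\vec{a})$ and $\beta_g(\vec{b})$ are integers in $[-nw,nw]$ — the source of the $\log n + \log w$ terms. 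The players then determine the firing set $P_\ell = \{g \in L_\ell : \alpha_g(\vec{a}) + \beta_g(\vec{b}) \ge k_g\}$ and add it to $P$. The decisive point is that $|P_\ell| \le e$ (the fired gates across all rounds number at most $e$), so round $\ell$ should split into only $(swn)^{O(e)}$ subcases: a firing gate is named with $\log s$ bits and certified as firing with $O(\log n + \log w)$ bits describing its partial sum $\alpha_g(\vec{a})$. Iterating over the $d$ rounds multiplies the leaf count to $(swn)^{O(ed)}$.

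The step I expect to be the real obstacle is exactly this per‑round branching bound. A single level $L_\ell$ may contain up to $s$ gates, and the naive treatment — reveal the value of every gate of $L_\ell$, or equivalently take a Hadamard product of one rank‑$O(nw)$ matrix per gate — incurs a blow‑up exponential in $|L_\ell|$ rather than in $e$, which would only give a worthless bound (an $s(\log n+\log w)$ term in place of $\log n + \log w$). One must therefore certify the (possibly numerous) \emph{non-firing} gates of a level collectively, using that the firing set is a size‑$\le e$ subset and that every gate's partial sums lie in a window of size $2nw+1$, rather than paying per non‑firing gate. Concretely I would isolate a sub‑protocol for the task ``Alice holds $(\alpha_g(\vec{a}))_{g \in L_\ell}$, Bob holds the thresholds $(k_g)_{g \in L_\ell}$, output $\{g : \alpha_g(\vec{a}) + \beta_g(\vec{b}) \ge k_g\}$ given that this set has size at most $e$,'' and prove it admits a protocol whose number of leaves depends only on $e,s,w,n$ and not on $|L_\ell|$; this is also where the constant in the exponent must be pinned down to match the factor $1$ in \eqref{eq:Intro:LB}. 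Granting this lemma, the leaf bound $(swn)^{O(ed)}$ together with the rectangle‑rank argument above completes the proof.
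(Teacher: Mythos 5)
You correctly identify the crucial obstacle: the per-level branching must be controlled by $e$, not by the number of gates at the level, and the non-firing gates are where this is hard. However, the ``sub-protocol lemma'' you defer to is the entire content of the proof, and as stated it is almost certainly false. You are asking for a \emph{deterministic protocol} whose leaf count is polynomial in $s,n,w$ (and exponential only in $e$) that outputs the size-$\le e$ firing set. Take a level with $s$ gates where $g_i$ reads only $(x_i,y_i)$ and fires iff $x_i=y_i=1$, with the promise that at most one fires: determining the firing set is then unique set intersection, whose deterministic communication cost is $\Omega(n)$, i.e.\ $2^{\Omega(n)}$ leaves, vastly more than $(swn)^{O(e)}$. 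The deeper issue is that a protocol bound would yield a \emph{deterministic communication} upper bound on $C$, which is a strictly stronger conclusion than the rank bound of Theorem~\ref{thm:Intro}; one should not expect it to follow from the hypotheses, and the theorem indeed does not assert it. Bounding rank is genuinely cheaper than building a protocol, and your reduction throws that advantage away.

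The paper avoids the protocol formulation entirely and works with rank directly. It decomposes $M_C=\sum_{\vec{P}}M_{\vec{P}}$ over the at most $\binom{s}{\le e-1}$ internal representations $\vec{P}$ (firing patterns) that can occur, then writes $M_{\vec{P}}$ as a Hadamard product $H[Q_1]\circ\cdots\circ H[Q_d]$ of per-level matrices. The key device your argument is missing is inclusion--exclusion truncated at size $e-1$: $H[Q_l]=\sum_{Q_l\subseteq T,\ |T|\le e-1}(-1)^{|T|-|Q_l|}M[T]$, where $M[T]$ is the indicator that all threshold functions in $T$ fire simultaneously. Each $M[T]$ is a conjunction of $|T|\le e-1$ threshold conditions and therefore has rank at most $(2nw+1)^{|T|}$ by partitioning into rectangles on Alice's partial sums --- this is where the $\log s + \log w + \log n$ per firing gate enters, exactly as you anticipated. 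The non-firing gates never need to be certified: they are handled implicitly by the cancellation in the alternating sum, which is valid because the energy bound guarantees that only sets $T$ of size $\le e-1$ can actually fire. Subadditivity of rank under $+$ and submultiplicativity under $\circ$ then give \eqref{eq:Intro:LB}, with no protocol and no rectangular partition of the input space required.
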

The theorem implies exponential lower bounds for sub-linear depth and sub-linear energy threshold circuits. As an example, let us consider a Boolean function $\conj_n$ defined as follows: For a $2n$ input variables $x_1, \dots , x_n$ and $y_1, \dots , y_n$,
\[
	\conj_n \inp{x}{y}	= \bigwedge_{i=1}^n x_i \vee y_i.
\]
We note that $\conj_n$ is a biologically motivated Boolean function: Maass~\cite{MAASS19971659} defined $\conj_n$ to model coincidence detection or a pattern matching, and Lynch and Musco~\cite{Lynch2022} introduced a related problem, called Filter problem, for studying theoretical aspect of spiking neural networks. Since $\conj_n$ is the complement of the well-studied Boolean function, the disjointness function, the rank of $\conj_n$ is $2^n$~\cite{Jukna:11}. Thus, the theorem implies that
%(See, for example, Theorem 13.10 of \cite{Jukna:11}),
%it holds for any threshold circuit of size $s$, depth $d$, energy $e$ and wight $w$
\begin{equation}\label{eq:Intro:LB_n}
	n \le ed(\log s + \log w + \log n)
\end{equation}
holds if a threshold circuit $C$ computes $\conj_n$. Arranging Eq.~(\ref{eq:Intro:LB_n}), we can obtain a lower bound
$2^{n/(ed)}/(wn) \le s$ which is exponential in $n$ if both $d$ and $e$ are sub-linear and $w$ is sub-exponential. For example, we can obtain an exponential lower bound $s = 2^{\Omega(n^{1/3})}$ even for threshold circuits of depth $n^{1/3}$, energy $n^{1/3}$ and weight $2^{o(n^{1/3})}$. We can obtain similar lower bounds for the Inner-Product function and the equality function, since they have linear rank.

Comparing the lower bound $s = 2^{\Omega(n/e^d)}$ given in~\cite{uchi:08} to ours, our lower bound is meaningful only for sub-exponential weight, but improves on it in two-fold: the lower bound is exponential even if $d$ is sub-linear, and provide a nontrivial lower bound for Boolean functions with much weaker condition: Threshold circuits need exponential size even for Boolean functions of the standard rank $\Omega(n)$.

Threshold circuits have received considerable attention in circuit complexity, and a number of
%It is known that even polynomial-size and constant-depth threshold circuits can compute a variety of Boolean functions including basic arithmetic operations such as addition, multiplication, division, \emph{etc}..
%are known to have large expressive power: even a single threshold gate can compute the majority function which the standard logic circuit of  needs an exponential size to compute~\cite{H:87,Y:85}, and
lower bound arguments have developed for threshold circuits under some restrictions on computational resources including size, depth, energy and weight~\cite{Amano:20,AM:05,CSS:19,Hajnal:93,Hastad:91,IPS:97,KM:16,MOSUZ18, Noam:93,RS:10,U20,uchi:08,Uchizawa:10:SET}. However, the arguments for lower bounds are designated for constant-depth threshold circuits, and hence cannot provide meaningful ones when the depth is not constant. In particular, $\conj_n$ is computable by a depth-2 and linear-size threshold circuit. Thus, directly applying known techniques are unlikely to yield an exponential lower bound for $\conj_n$.

To complement Theorem~\ref{thm:Intro}, we also show that the lower bound is tight up to a constant factor if the product of $e$ and $d$ are small:
\begin{theorem}\label{thm:Intro_UP}
	For any integers $e$ and $d$ such that $2 \le e$ and $2 \le d$, $\disj_n$ is computable by a threshold circuit of size
	\[
	s \le (e-1)(d-1)\cdot 2^{\frac{n}{(e-1)(d-1)}}.
	\]
	depth $d$, energy $e$ and weight
	\[
	w \le \left( \frac{n}{(e-1)(d-1)} \right)^2.
	\]
\end{theorem}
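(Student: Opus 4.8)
\textbf{Proof sketch of Theorem~\ref{thm:Intro_UP}.}
The plan is to construct an explicit \emph{cascade of block collision-detectors}. Set $m=(e-1)(d-1)$ and $b=n/m$ (assume $m\mid n$; otherwise round the block sizes). Partition the $n$ coordinate pairs $\{(x_i,y_i)\}$ into $m$ blocks $B_1,\dots,B_m$ of size $b$, and group these blocks into $d-1$ consecutive groups $G_1,\dots,G_{d-1}$, each containing $e-1$ blocks. Call a block $B$ \emph{dirty} on an input if $x_iy_i=1$ for some $i\in B$, so that $\disj_n\inp{x}{y}=1$ exactly when no block is dirty. For each level $t\in\{1,\dots,d-1\}$, each block $B\in G_t$, and each nonzero pattern $\vec a\in\{0,1\}^{|B|}$, introduce one gate $D_{B,\vec a}$, to be designed so that it outputs $1$ iff (i) $\vec x$ restricted to $B$ equals $\vec a$, (ii) $B$ is dirty, and (iii) no gate $D_{B',\vec a'}$ with $B'$ in an earlier group $G_1,\dots,G_{t-1}$ outputs $1$. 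Let the single output gate be the NOR of all the $D_{B,\vec a}$.

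First I would check that each $D_{B,\vec a}$ is realizable as one threshold gate of the claimed weight. Condition (i) is a conjunction of $|B|\le b$ literals, enforced by a multiplier of size $b$ on the $x$-variables of $B$; given (i), condition (ii) is simply $\sum_{i\in B:\,a_i=1}y_i\ge 1$, contributing weights of magnitude $1$; condition (iii) is a ``veto'' term $-b\sum D_{B',\vec a'}$ summed over the (up to $O(m2^b)$) detectors of earlier groups. Choosing the multipliers so that a violation of (i) or a nonzero veto dominates the small ($\le b$) contribution of (ii) yields a gate all of whose weights and threshold have magnitude $O(b^2)=O((n/m)^2)$; since its only non-input inputs have depth $<t$, it sits at depth $t$, and the output NOR sits at depth $d$ with weights of magnitude $1$.

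I would then establish correctness and the energy bound together. Let $t^\ast$ be the least index with $G_{t^\ast}$ containing a dirty block (if none, no detector fires and the output is correctly $1$). No block of $G_1,\dots,G_{t^\ast-1}$ is dirty, so none of their detectors fire; hence at level $t^\ast$ the veto (iii) holds, and for every dirty $B\in G_{t^\ast}$ the gate $D_{B,\vec x|_B}$ fires — so at least one detector fires and the output is $0$, as required. For $t>t^\ast$ the veto (iii) fails for every $D_{B,\vec a}$ with $B\in G_t$, so none fire. Thus the only detectors that can output $1$ lie in $G_{t^\ast}$, at most one per dirty block, i.e.\ at most $|G_{t^\ast}|=e-1$ of them; together with the output gate (which is $1$ only when no detector is), at most $\max\{e-1,1\}\le e$ gates output $1$. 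The size is $\sum_{t}\sum_{B\in G_t}(2^{|B|}-1)+1=m2^{b}-m+1\le m\,2^{n/m}$, the depth is $d$, and the weight is $O((n/m)^2)$; a slightly more careful choice of constants yields exactly the stated bounds.

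The step I expect to be the crux is the joint control of weight, depth, and energy in the second paragraph. The natural way to ``switch off'' detectors once a dirty block has been located would be to thread a running OR-flag through the levels; but such a flag stays on and would contribute up to $d-1$ extra active gates, breaking the energy bound, while re-deriving it at each level would double the depth. Letting every level-$t$ detector veto directly on the \emph{whole} collection of earlier-level detectors is what keeps the depth at exactly $d$ and the energy at $e$; the cost is a very large fan-in into these gates, which is free in the weight measure (it counts the magnitude of individual weights, not their number) since each veto edge carries weight only $b$. Bringing the one dominant coefficient — the threshold, of order $b\,\|\vec a\|_1$ — down to $(n/m)^2$ rather than a larger polynomial in $b$ is the place where the gadget must be arranged with care.
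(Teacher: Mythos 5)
Your construction is correct and is essentially identical to the paper's: the paper also partitions $[n]$ into $(e-1)(d-1)$ blocks, builds for each block a ``selective'' DNF-style family of threshold detectors (one per nonzero $x$-pattern, with $\pm|B|$ multiplier weights and threshold $|B|^2+1$), stacks $e-1$ such families per level over $d-1$ levels with direct inhibitory edges from \emph{all} lower-level detectors, and takes a NOR at the top -- which is exactly your cascade, including the key observation that the direct veto (rather than a propagated flag) is what preserves both depth $d$ and energy $e$. The only difference is presentational: the paper routes the argument through an abstract ``$z$-piecewise function computed by selective neural sets'' lemma, whereas you instantiate the same circuit directly.
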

Substituting $s, d, e$ and $w$ of a threshold circuit given in Theorem~\ref{thm:Intro_UP} to the right hand side of Eq.~(\ref{eq:Intro:LB_n}), we have
\begin{eqnarray*}
	&&ed(\log s + \log w + \log n)\\
	&\le& ed \left( \frac{n}{(e-1)(d-1)} \! + \! \log(e-1)(d-1) \! + \! \log \left( \frac{n}{(e-1)(d-1)} \right)^2 \! + \! \log n \right)\\
	&\le& 4n + O(ed\log n),
\end{eqnarray*}
which almost matches the left hand side of Eq.~(\ref{eq:Intro:LB_n}) if $ed = o(n/\log n)$. Thus, Theorem~\ref{thm:Intro} neatly captures the computational aspect of threshold circuits computing $\conj_n$. Recall that any linear decision tree of polynomial number of leaves can be simulated by a polynomial-size and logarithmic-energy threshold circuit~\cite{UDM:08}. Also, it is known that any Boolean function is computable by a threshold circuit of energy one if exponential size is allowed~\cite{MOSUZ18}. Thus, we believe that the situation $ed = o(n/\log n)$ is not too restrictive. we We also show that the lower bound is also tight for the equality function.

%This perspective is important not only for the brain computation but machine learning based on artificial neural networks such as deep learning, because an artificial neural network successfully obtained through a deep learning process would provide such good internal representations.

\smallskip
\noindent
{\bf Our Result for Discretized Circuits}.
Besides threshold circuits, we consider other other well-studied model of neural network, where an activation function and weights of an computational element are discretized (such as, discretized sigmoid or ReLU circuits). The size, depth, energy and weight are important parameters also for artificial neural networks. The size and depth are major topics on success of deep learning. The energy is related to important techniques for deep learning method such as regularization, sparse coding, or sparse autoencoder~\cite{He:14,Lee:06,Ng:11}. The weight resolution is closely related to chip resources in neuromorphic hardware systems~\cite{PPSPSDM12}, and quantization schemes received attention~\cite{CourbariauxEtAl15,HubaraEtAl:18}. 

We define similar notions for the energy and weight of a discretized circuit, and show that any discretized circuit can be simulated by a threshold circuit with a moderate increase in size, depth, energy, and weight. Consequently, combining with Theorem~\ref{thm:Intro}, we can show that the rank is bounded by a product of the polylogarithmic factors of $s,w$ and linear factors of $d, e$ for discretized circuits. For example, we can obtain the following proposition for sigmoid circuits: 
\begin{theorem}\label{thm:Intro_Dscr}
	If a sigmoid circuit $C$ of size $s$, depth $d$, energy $e$, and weight $w$ computes a Boolean function $f$, then it holds that
	\begin{equation*}
		\log ( rk(M_C) ) =O (ed(\log s+  \log w + \log n)^3).
	\end{equation*}
\end{theorem}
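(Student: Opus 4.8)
The plan is to reduce to Theorem~\ref{thm:Intro} via a simulation argument: show that a sigmoid circuit $C$ of size $s$, depth $d$, energy $e$ and weight $w$ can be converted into an equivalent threshold circuit $C'$ whose parameters are controlled by $s, w, n$ and for which $C'$ computes the same Boolean function (hence the same communication matrix $M_C = M_{C'}$). Once we have such a $C'$ with size $s' $, depth $d'$, energy $e'$, weight $w'$, Theorem~\ref{thm:Intro} gives $\log(rk(M_C)) \le e'd'(\log s' + \log w' + \log n)$, and it remains to bound the right-hand side by $O(ed(\log s + \log w + \log n)^3)$.

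First I would observe that, since the inputs are Boolean and the weights are discretized with magnitude at most $w$, every preactivation in $C$ lies in a finite set of rationals with bounded denominator; by the discretization of the sigmoid activation, each gate outputs a value from a discrete set whose elements can be encoded with $O(\log s + \log w + \log n)$ bits. Call this bit-length $p$. The key step is then a gadget that simulates one discretized sigmoid gate by a small threshold subcircuit: reading the binary encodings of the (discrete) outputs of the predecessor gates, compute the weighted sum by a constant number of threshold layers, and then evaluate the discretized sigmoid as a ``small-domain'' Boolean function of the $O(p)$-bit sum — which, being a function on $O(p)$ bits, is computable by a threshold circuit of size $2^{O(p)} = \mathrm{poly}(s, w, n)$, constant depth, and weight $2^{O(p)}$. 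Crucially the energy of such a gadget is $O(p)$, not $O(2^p)$, because only the bits that are actually $1$ in the output encoding contribute — this is exactly the linear-decision-tree-to-threshold-circuit phenomenon from~\cite{UDM:08} and is what keeps the energy blow-up multiplicative in $p$ rather than exponential. Stacking these gadgets layer by layer yields $C'$ with depth $d' = O(d)$, size $s' = \mathrm{poly}(s,w,n)$, weight $w' = \mathrm{poly}(s,w,n) = 2^{O(p)}$, and energy $e' = O(ep)$: each of the $e$ originally-active gates is replaced by a gadget contributing $O(p)$ active threshold gates, and inactive gates (whose discretized output is the ``zero'' codeword) can be arranged to contribute nothing.

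Plugging these into Theorem~\ref{thm:Intro} gives $\log(rk(M_C)) \le e'd'(\log s' + \log w' + \log n) = O(ep)\cdot O(d)\cdot O(p) = O(ed\,p^2) = O\big(ed(\log s + \log w + \log n)^2\big)$, which is in fact one log-factor better than claimed; the extra factor in the statement gives room for the details of the encoding (e.g.\ if arithmetic on $p$-bit numbers costs another factor of $p$ in size or weight inside the gadget, or if one simulates ReLU circuits where preactivations are unbounded and a $O(\log)$-bit truncation analysis is needed). The main obstacle I anticipate is the bookkeeping in the gadget: one must verify that the discretized sigmoid really does take only $2^{O(p)}$ distinct values on the reachable preactivations (this needs a careful bound on how the discrete output set propagates through $d$ layers without the bit-length growing beyond $O(\log s + \log w + \log n)$), and that the ``zero codeword'' convention can be maintained consistently so that inactive gates contribute zero energy — in particular the sigmoid's output at preactivation $0$ is $1/2 \neq 0$, so one needs to shift/recode outputs (e.g.\ subtract the baseline) so that the genuinely-inactive state maps to the all-zero encoding. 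Handling ReLU (unbounded activation) versus sigmoid (bounded) will require slightly different truncation arguments, but both reduce to the same template.
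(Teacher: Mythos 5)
Your overall strategy---simulate the discretized circuit by a threshold circuit and then invoke Theorem~\ref{thm:Intro}---is exactly the paper's, and your bookkeeping of the reachable values (a set of size $2^{O(p)}$ with $p = O(\log s + \log w + \log n)$) matches the paper's observation that each gate has only $O((s+n)w)$ reachable potentials. The gap is in the per-gate gadget: you claim it can simultaneously have constant depth, polynomial size, \emph{and} energy $O(p)$, and that combination is unjustified---it is precisely the hard part. To produce the output of a gate you must in effect decide which of the $2^{O(p)}$ reachable potential values occurred. A constant-depth lookup does this with comparator gates of the form $\mathrm{sign}(p_g(\vec{x},\vec{y}) - v)$, but then every comparator with $v \le p_g(\vec{x},\vec{y})$ fires, giving energy $\Theta((s+n)w)$ per gate, not $O(p)$; the selective-neural-set trick (Theorem~\ref{thm:neuralset}) that would restore sparsity requires the input to already be presented as Boolean bits, which brings you back to the unaddressed energy of your ``constant number of threshold layers'' for iterated addition---a TC$^0$ subcircuit whose energy is uncontrolled and which is exercised even by gates that are silent in $C$. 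The linear-decision-tree simulation of~\cite{UDM:08} that you invoke resolves this tension by a \emph{binary search} over the reachable potentials, and that is what the paper actually does: the potential is fed directly, as a linear form over the leaf gates of the predecessors' gadgets (no binary encoding or separate adder is ever needed), into a search tree of $O((s+n)w)$ threshold gates, and only the $O(p)$ gates on the root-to-leaf path fire, and only when the simulated gate is outside its silent range. But that search tree has depth $O(p)$, not $O(1)$, which is exactly where the third factor of $(\log s + \log w + \log n)$ in the statement comes from. Your claimed $O(ed\,p^2)$ bound, one factor better than the theorem, is a symptom of assuming the gadget is simultaneously shallow and sparse.

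A smaller point: your worry about $\mathrm{sigmoid}(0) = 1/2$ and recoding the inactive state is already absorbed into the paper's definitions---a discretized activation is assumed to have a \emph{silent range}, an interval on which it is exactly $0$, and energy counts gates with non-zero output; the paper then handles the two non-silent half-lines $[t_{\max},\infty)$ and $(-\infty,t_{\min}]$ by two separate search trees.
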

Maass, Schnitger and Sontag~\cite{MSS91} showed that a sigmoid circuit could be simulated by a threshold circuit, but their simulation was optimized to be depth-efficient and did not consider energy. Thus, their result does not fit into our purpose. 

Theorems~\ref{thm:Intro} and~\ref{thm:Intro_Dscr} imply that a threshold circuit or discretized circuit are able to compute a Boolean function of bounded rank. Thus, we can consider these theorems as bounds on corresponding concept classes. According to the bound, $c$ times larger depth is comparable to $2^c$ times larger size. Thus, large depth could enormously help neural networks to increase its expressive power. Also, the bound suggests that increasing depth could also help a neural network to acquire sparse activity when we have hardware constraints on both the number of neurons and the weight resolution. These observations may shed some light on the reason for the success of deep learning.

\smallskip
\noindent
{\bf Organization}.
The rest of the paper is organized as follows. In Section 2, we define terms needed for analysis. In Section 3, we present our main lower bound result. In Section 4, we show the tightness of the bound by constructing a threshold circuit computing $\conj_n$. In Section 5, we show that a discretized circuit can be simulated by a threshold circuit. In Section 6, we conclude with some remarks.

\section{Preliminaries}

For an integer $n$, we denote by $[n]$ a set $\{1, 2, \dots n\}$. The base of the logarithm is two unless stated otherwise. In Section~\ref{sec:circuit}, we define terms on threshold circuits and discretized circuits. In Section~\ref{sec:matrx}, we define communication matrix, and present some known facts. 

\subsection{Circuit Model}\label{sec:circuit}
In Sections~\ref{sec:TC} and~\ref{sec:DC}, we give definitions of threshold and discritized circuits, respectively.

\subsubsection{Threshold Circuits}\label{sec:TC}
Let $k$ be a positive integer. A \emph{threshold gate} $g$ with $k$ input variables $\xi_1, \xi_2, \dots, \xi_k$ has weights $w_1, w_2, \dots , w_k$, and a threshold $t$. We define the output $g(\xi_1, \xi_2, \dots, \xi_k)$ of $g$ as
\begin{eqnarray*}
	g (\xi_1, \xi_2, \dots, \xi_k ) = \sign\left(\sum_{i=1}^k w_i \xi_i - t \right)
	=
	\left\{
	\begin{array}{ll}
		1 & \mbox{ if } t \le \sum_{i=1}^k w_i \xi_i;\\
		0 & \mbox{ otherwise}
	\end{array}
	\right.
\end{eqnarray*}
 To evaluate the weight resolution, we assume single synaptic weight to be discrete, and that $w_1, w_2, \dots , w_n$ are integers. The \emph{weight $w_g$ of $g$} is defined as the maximum of the absolute values of $w_1, w_2, \dots , w_k$. In other words, we assume that $w_1, w_2, \dots , w_k$ are $O(\log w_g)$-bit coded discrete values. Throughout the paper, we allow a gate to have both positive and negative weights, although biological neurons are either excitatory (all the weights are positive) or inhibitory (all the weights are negative). As mentioned in~\cite{MAASS19971659}, this relaxation has basically no impact on circuit complexity investigations, unless one cares about constant blowup in computational resources. This is because a single gate with positive and negative weights can be simulated by a pair of excitatory and inhibitory gates.

A \emph{threshold circuit} $C$ is a combinatorial circuit consisting of threshold gates, and is expressed by a directed acyclic graph. The nodes of in-degree 0 correspond to input variables, and the other nodes correspond to gates.
%Let $g$ be a gate in $C$. The inputs of $g$ in $C$ consists of input variables and outputs of the gates directed to $g$ on the underlying graph of $C$.
Let $G$ be a set of the gates in $C$. For each gate $g \in G$, the \emph{level} of $g$, denoted by $\lev{g}$, is defined as the length of a longest path from an input variable to $g$ on the underlying graph of $C$. For each $l \in [d]$, we define $G_l$ as a set of gates in the $l$th level: 
\[
	G_l = \{ g \in G \mid \lev{g} = l\}.
\]
Given an input assignment $(\vec{a}, \vec{b}) \in \dom$ to $(\vec{x}, \vec{y})$, the outputs of the gates in $C$ are inductively determined from the bottom level.

In this paper, we consider a threshold circuit $C$ for a Boolean function $f: \dom \to \{ 0, 1\}$. Thus, $C$ has $2n$ Boolean input variables $\vec{x} = (x_1, x_2, \dots , x_n)$ and $\vec{y} = ( y_1, y_2, \dots , y_n)$, and a unique output gate, denoted by $\gtop$, which is a linear classifier separating internal representations given by the gates in the lower levels (possibly together with input variables).
%Although the inputs to $g$ are not only $\vec{x}$ and $\vec{y}$ but the outputs of gates in the lower levels, we may write $g(\vec{x}, \vec{y})$, because a pair of assignments decides the output of $g$. We define the output of $C$ as that of $\gtop$: $C \inp{x}{y} = \gtop \inp{x}{y}$.
Consider a gate $g$ in $C$. Let $w^x_1, w^x_2, \dots , w^x_n$ (resp., $w^y_1, w^y_2, \dots , w^y_n$) be the weights for $x_1, x_2, \dots, x_n$ (resp., $y_1, y_2, \dots, y_n$), and $t_g$ be threshold of $g$. For each gate $h$ directed to $g$, let $w_{h, g}$ be a weight of $g$ for the output of $h$. Then the output $g(\vec{x}, \vec{y})$ of $g$ is defined as
\[
	g(\vec{x}, \vec{y}) =
		\sign\left( p_g\inp{x}{y} - t_g\right)
\]
where $p_g\inp{x}{y}$ denotes a potentials of $g$ invoked by the input variables and gates:
\[
	p(\vec{x}, \vec{y}) = \sum_{i=1}^n w^x_ix_i +  \sum_{i=1}^n w^y_i y_i + \sum_{l = 1}^{\mbox{{\scriptsize lev}}(g)-1} \sum_{h \in G_{l}} w_{h, g}h(\vec{x}, \vec{y}).
\]
We sometimes write $p^x_g(\vec{x})$ (resp., $p^y_g(\vec{y})$) for the potential invoked by $\vec{x}$ (resp., $\vec{y}$):
\[
	p^x_g(\vec{x}) = \sum_{i=1}^n w^x_ix_i \quad \mbox{ and } \quad p^y_g(\vec{y}) = \sum_{i=1}^n w^y_iy_i.
\]
Although the inputs to $g$ are not only $\vec{x}$ and $\vec{y}$ but the outputs of gates in the lower levels, we write $g(\vec{x}, \vec{y})$ for the output of $g$, because $\vec{x}$ and $\vec{y}$ inductively decide the output of $g$. We say that $C$ \emph{computes} a Boolean function $f: \dom \to\{0, 1\}$ if $\gtop (\vec{a}, \vec{b})=f(\vec{a}, \vec{b})$ for every $\inp{a}{b}\in \dom$.

Let $C$ be a threshold circuit. We define \emph{size} $s$ of $C$ as the number of the gates in $C$, and \emph{depth} $d$ of $C$ as the level of $\gtop$. We define the \emph{energy} $e$ of $C$ as
\[
	e=\max _{(\vec{a}, \vec{b}) \in\dom} \sum_{g\in G} g(\vec{a}, \vec{b}).
\]
We define \emph{weight} $w$ of $C$ as the maximum of the weights of the gates in $C$: $w = \max_{g \in G} w_g$.

\if0
\subsubsection{ReLU Circuits}

A \emph{ReLU gate} $g$ with $k$ input variables $\xi_1, \xi_2, \dots, \xi_k$ has weights $w_1, w_2, \dots , w_k$, and a threshold $t$. We define the output $g(\xi_1, \xi_2, \dots, \xi_k)$ of $g$ as
\begin{eqnarray*}\label{eq:ReLU}
	g (\xi_1, \xi_2, \dots, \xi_k ) = \max \left(0, \sum_{i=1}^k w_i \xi_i - t \right)
\end{eqnarray*}
Similarly to threshold gates, we assume that the weights are integers.

A \emph{ReLU circuit} is a combinatorial circuit consisting of ReLU gates except that the top gate $\gtop$ is a threshold gate, that is, a linear classifier. We define size and depth of a ReLU circuit same as the ones for a threshold circuit. We define energy $e$ of a ReLU circuit as the maximum number of gates that output non-zero values:
\[
	e=\max _{(\vec{a}, \vec{b}) \in \{ 0, 1\}^n \times \{ 0, 1\}^n\}} \sum_{g\in G} \true{g(\vec{a}, \vec{b}) \neq 0}
\]
where $\true{\mathrm{P}}$ for a statement $P$ denote a notation of the function which outputs one if $\mathrm{P}$ is true, and zero otherwise. Let $g$ be a gate in $C$, and $w_1^x, w_2^x, \dots , w_n^x$ (resp. $w_1^y, w_2^y, \dots , w^y_n$) be the weights for the inputs $x_1, x_2, \dots , x_n$ (resp., $w_1^y, w^y_2, \dots , w^y_n$). We define \emph{weight $w_g$ of $g$} as the maximum values that $g$ possibly receives from a single input:
\[
	w_g = \max \left( w_1^x, w_2^x, \dots , w_n^x, w_1^y, w_2^y, \dots , w_n^y, \max_{\inp{a}{b} \in \dom}\max_{h \in G} w_{h, g} h \inp{a}{b} \right).
\]
We define \emph{weight} $w$ of $C$ as the maximum of the weights of the gates in $C$.

%w * \as(p) may has fraction
%energy is not well-defined
\fi

\subsubsection{Discretized Circuits}\label{sec:DC}

Let $\af$ be an activation function. Let $\dis$ be a discretizer that maps a real number to a number representable by a bitwidth $b$. We define a discretized activation function $\dis \circ \af$ as a composition of $\af$ and $\delta$, that is, $\dis \circ \af(x) = \dis(\af(x))$ for any number $x$. We say that $\dis \circ \af$ has \emph{silent range for an interval $I$} if $\delta \circ \af (x) = 0$ if $x \in I$, and $\delta \circ \af (x) \neq 0$, otherwise. For example, if we use the ReLU function as the activation function $\af$, then $\delta \circ \af$ has silent range for $I = (-\infty, 0]$ for any discretizer $\dis$. If we use the sigmoid function as the activation function $\af$ and linear partition as discretizer $\delta$, then $\delta \circ \af$ has silent range for $I = (-\infty, t_{\max}]$ where $t_{\max} = \ln (1/(2^b-1))$ where $\ln$ is the natural logarithm.

Let $\delta \circ \af$ be a discretized activation function with silent range. A \emph{$(\delta \circ \af)$-gate} $g$ with $k$ input variables $\xi_1, \xi_2, \dots, \xi_k$ has weights $w_1, w_2, \dots , w_k$ and a threshold $t$, where each of the weights and threshold are discretized by $\delta$. The output $g(\xi_1, \xi_2, \dots, \xi_k)$ of $g$ is then defined as
\begin{eqnarray*}
	g (\xi_1, \xi_2, \dots, \xi_k ) =  \delta \circ \af \left(\sum_{i=1}^k w_i \xi_i - t \right).
\end{eqnarray*}
A \emph{$(\delta \circ \af)$-circuit} is a combinatorial circuit consisting of $(\delta \circ \af)$-gates except that the top gate $\gtop$ is a threshold gate, that is, a linear classifier. We define size and depth of a $(\delta \circ \af)$-circuit same as the ones for a threshold circuit. We define energy $e$ of a $(\delta \circ \af)$-circuit as the maximum number of gates outputting non-zero values in the circuit:
\[
	e=\max _{(\vec{a}, \vec{b}) \in \dom} \sum_{g\in G} \true{g(\vec{a}, \vec{b}) \neq 0}
\]
where $\true{\mathrm{P}}$ for a statement $P$ denote a notation of the function which outputs one if $\mathrm{P}$ is true, and zero otherwise. We define \emph{weight} $w$ of $C$ as $w = 2^{2b}$, where $2b$ is the bitwidth possibly needed to represent a potential value invoked by a single input of a gate in $C$.

\if0
We say that $\af$ is typical if the following hold:
\begin{itemize}
	\item $\lim_{x \to -\infty} \af(x) = 0$; and
	\item $\af$ is non-decreasing, that is, $\varphi(x)\le \varphi(x')$ for any pair of values $x$ and $x'$ satisfying $x < x'$.
\end{itemize}
Note that the sigmoid function and ReLU function are typical. 

---
Let $k$ be a positive integer. A \emph{$\af$-gate} $g$ with $k$ input variables $\xi_1, \xi_2, \dots, \xi_k$ has weights $w_1, w_2, \dots , w_k$, and $t$, and the output $g(\xi_1, \xi_2, \dots, \xi_k)$ of $g$ is defined to be
\begin{eqnarray*}
	g (\xi_1, \xi_2, \dots, \xi_k ) = \af\left(\sum_{i=1}^k w_i \xi_i - t \right)
\end{eqnarray*}
where $\af$ is an activation function. We say that $\af$ is typical if the following hold:
\begin{itemize}
	\item $\lim_{p \to -\infty} \af(p) = 0$; and
	\item $\af$ is non-decreasing, that is, $\varphi(x)\le \varphi(y)$ for any pair of values $x$ and $y$ satisfying $x < y$.
\end{itemize}
Note that the threshold function, sigmoid function, and ReLU function are typical.
---
A \emph{$\af$-circuit} $C$ is a combinatorial circuit consisting of $\af$-gates, and expressed by a directed acyclic graph, where nodes of in-degree 0 correspond to input variables, and the other nodes correspond to gates.
---

Consider a gate $g$ in the $l$-th level of $C$. Let $w^x_1, w^x_2, \dots , w^x_n$ (resp., $w^y_1, w^y_2, \dots , w^y_n$) be the weights for $x_1, x_2, \dots, x_n$ (resp., $y_1, y_2, \dots, y_n$), and $t_g$ be threshold of $g$. For each gate $h$ directed to $g$, let $w_{h, g}$ be a weight of $g$ for the output of $h$. Then the output $g(\vec{x}, \vec{y})$ of $g$ for $(\vec{x}, \vec{y})$ is defined as follows:
\[
	g(\vec{x}, \vec{y}) =
	\sign\left( p^x_g(\vec{x})+  p^y_g(\vec{y}) + \sum_{l = 1}^{\mbox{{\scriptsize lev}}(g)-1} \sum_{h \in G_{l}} w_{h, g}h(\vec{x}, \vec{y}) - t_g \right)
\]
where $p^x_g$ and $p^y_g$ are potentials of $g$ invoked by $\vec{x}$ and $\vec{y}$, respectively:
\[
	p^x_g(\vec{x}) = \sum_{i=1}^n w^x_ix_i \quad \mbox{and} \quad 	p^y_g(\vec{x}) = \sum_{i=1}^n w^y_i y_i.
\]
Although the inputs to $g$ are not only $\vec{x}$ and $\vec{y}$ but the outputs of gates in the lower levels, we write $g(\vec{x}, \vec{y})$, because $\vec{x}$ and $\vec{y}$ decide the output of $g$. We define \emph{$x$-range $R^x_g$ of $g$} as
\[
	R^x_g =  \{ p^x_g(\vec{x}) \mid \vec{x} \in \{0, 1\}^n\}
\]
and \emph{range $R_g$ of $g$} as
\[
	R_g =  \{ p^x_g(\vec{x}) + p^y_g(\vec{y}) + \sum_{l = 1}^{\mbox{{\scriptsize lev}}(g)-1} \sum_{h \in G_{l}} w_{h, g}h(\vec{x}, \vec{y}) - t_g \mid \inp{x}{y}\in \dom \}.
\]

Let $k$ be a positive integer. A \emph{(linear) threshold gate} $g$ with $k$ input variables $\xi_1, \xi_2, \dots, \xi_k$ has weights $w_1, w_2, \dots , w_k$, and a threshold $t$. We define the output $g(\xi_1, \xi_2, \dots, \xi_k)$ of $g$ as
\begin{eqnarray*}
	g (\xi_1, \xi_2, \dots, \xi_k ) = \sign\left(\sum_{i=1}^k w_i \xi_i - t \right)
	=
	\left\{
	\begin{array}{ll}
		1 & \mbox{ if } t \le \sum_{i=1}^k w_i \xi_i;\\
		0 & \mbox{ otherwise},
	\end{array}
	\right.
\end{eqnarray*}

A \emph{threshold circuit} is a combinatorial circuit consisting of threshold gates. A threshold circuit $C$ is expressed by a directed acyclic graph, where nodes of in-degree 0 correspond to input variables, and the other nodes correspond to gates. We call a gate $g$ \emph{output gate} if $g$ has out-degree zero. The inputs of a gate $g$ in $C$ consists of some of the input variables and the outputs of the gates directed to $g$. For each gate $g$ in $C$, the \emph{level} of $g$ is defined as the length of a longest path from an input variable to $g$, and denoted by $\lev{g}$. Let $G$ be a set of the gate in $C$. For each level $l$, we define
\[
	G_l = \{ g \in G \mid \lev{g} = l\}.
\]

In this paper, we mainly consider a threshold circuit for a Boolean function $f: \dom \to \{ 0, 1\}$. Thus $C$ has $2n$ input variables $\vec{x} = (x_1, x_2, \dots , x_n)$ and $\vec{y} = ( y_1, y_2, \dots , y_n)$, and a unique output gate. Consider a gate $g$ in the $l$-th level of $C$. Let $w^x_1, w^x_2, \dots , w^x_n$ (resp., $w^y_1, w^y_2, \dots , w^y_n$) be the weights for $x_1, x_2, \dots, x_n$ (resp., $y_1, y_2, \dots, y_n$), and $t_g$ be threshold of $g$. For each gate $h$ directed to $g$, let $w_{h, g}$ be a weight of $g$ for the output of $h$. Then the output $g(\vec{x}, \vec{y})$ of $g$ for $(\vec{x}, \vec{y})$ is defined as follows:
\[
	g(\vec{x}, \vec{y}) = \sign\left( p^x_g(\vec{x})+  p^y_g(\vec{y}) + \sum_{l = 1}^{\mbox{{\scriptsize lev}}(g)-1} \sum_{h \in G_{l}} w_{h, g}h(\vec{x}, \vec{y}) - t_g \right)
\]
where $p^x_g$ and $p^y_g$ are potentials of $g$ invoked by $\vec{x}$ and $\vec{y}$, respectively:
\[
	p^x_g(\vec{x}) = \sum_{i=1}^n w^x_ix_i \quad \mbox{and} \quad 	p^y_g(\vec{x}) = \sum_{i=1}^n w^y_i y_i.
\]
Although the inputs to $g$ are not only $\vec{x}$ and $\vec{y}$ but the outputs of gates in the lower levels, we write $g(\vec{x}, \vec{y})$, because $\vec{x}$ and $\vec{y}$ decide the output of $g$. We define \emph{range $R$ of $g$} as
\[
	R_g =  \{ p^x_g(\vec{x}) \mid \vec{x} \in \{0, 1\}^n\}.
\]
Given an input $(\vec{x}, \vec{y}) \in \{ 0, 1\} ^n$, the outputs of the gates in $C$ are inductively determined to either 0 or 1 from the bottom level. Let $\gtop$ be the output gate of $C$, and define $C \inp{x}{y} = \gtop \inp{x}{y}$. We say that $C$ \emph{computes} a Boolean function $f: \{0, 1\}^n \times \{ 0, 1 \}^n \to\{0, 1\}$ if $C (\vec{x}, \vec{y})=f(\vec{x}, \vec{y})$ for every $\vec{x}\in \{ 0, 1\}^n$ and $\vec{y} \in \{0, 1\}^n$.

Let $C$ be a threshold circuit. We define \emph{size} of $C$ as the number of the gates in $C$, \emph{depth} of $C$ as the level of $\gtop$, and \emph{range $r$ of $C$} as
\[
	r =\max _{g \in G} |R_g|.
\]
We define the \emph{energy} $e$ of $C$ as 
\[
	e=\max _{(\vec{x}, \vec{y}) \in \{ 0, 1\}^n \times \{ 0, 1\}^n\}} \sum_{g\in G} g(\vec{x}, \vec{y}).
\]
\fi

\subsection{Communication Matrix and its Rank}\label{sec:matrx}
Let $Z \subseteq \{ 0, 1\}^n$. For a Boolean function $f:Z \times Z \to \{ 0, 1\}$, we define a communication matrix $M_f$ over $Z$  as a $2^{|Z|} \times 2^{|Z|}$ matrix where each row and column are indexed by $\vec{a} \in Z$ and $\vec{b} \in Z$, respectively, and each entry is defined as $M_f(\vec{a}, \vec{b}) = f(\vec{a}, \vec{b})$. We denote by $rk(M_f)$ the rank of $M_f$ over $\mathbb{F}_2$. If a circuit $C$ computes $f$, we may write $M_C$ instead of $M_f$. If a Boolean function $f$ does not have an obvious separation of the input variables to $\vec{x}$ and $\vec{y}$, we may assume a separation so that $rk(M_f)$ is maximized.

Let $k$ and $n$ be natural numbers such that $k \le n$. Let
\[
	Z_k = \{ \vec{a} \in \{ 0, 1\}^n \mid \mbox{The number of ones in $\vec{a}$ is at most $k$} \}.
\]
A $k$-disjointness function $\mathrm{DISJ}_{n, k}$ over $Z_k$ is defines as follows:
\[
	\mathrm{DISJ}_{n, k} \inp{x}{y}	= \bigwedge_{i=1}^n \overline{x_i} \vee \overline{y_i}
\]
where the input assignments are chosen from $Z_k$. The book~\cite{Jukna:11} contains a simple proof showing $\mathrm{DISJ}_{n, k}$ has full rank. 
\begin{theorem}[Theorem 13.10 \cite{Jukna:11}]\label{thm:rank_of_disj}
	$rk(M_{\mathrm{DISJ}_{n, k}}) = \sum_{i=0}^k { n \choose i}$.
	In particular, $rk(M_{\mathrm{DISJ}_{n, n}}) = 2^n$.
\end{theorem}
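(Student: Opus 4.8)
The plan is to realise the communication matrix as a product of two matrices that are visibly nonsingular over $\mathbb{F}_2$. Identify each assignment $\vec{a} \in Z_k$ with the set $S_{\vec a} = \{ i \in [n] : a_i = 1\}$; then the rows and columns of $M := M_{\mathrm{DISJ}_{n,k}}$ are indexed exactly by the subsets of $[n]$ of size at most $k$, and $M[\vec a, \vec b] = 1$ iff $S_{\vec a} \cap S_{\vec b} = \emptyset$. Writing $N = \sum_{i=0}^k \binom{n}{i}$ for the number of such subsets, $M$ is an $N \times N$ matrix, so $rk(M) \le N$ is immediate, and it suffices to show that $M$ is nonsingular over $\mathbb{F}_2$.

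First I would introduce the inclusion matrix $V$, indexed in both coordinates by the subsets $R \subseteq [n]$ with $|R| \le k$, and defined by $V[R, S] = 1$ iff $R \subseteq S$. This index set is downward closed, so $V$ is well defined. Ordering the subsets by non-decreasing cardinality makes $V$ upper triangular with every diagonal entry equal to $1$ (if $V[R,S]=1$ and $R$ occurs after $S$ in the order then $R\subseteq S$ and $|R|\ge|S|$ force $R=S$), so $\det V = 1$ and $V$, hence $V^\top$, is invertible over $\mathbb{F}_2$. The key step is then the identity $M = V^\top V$ over $\mathbb{F}_2$: for subsets $S, T$ of size at most $k$,
\[
	(V^\top V)[S, T] = \sum_{|R| \le k} V[R, S]\, V[R, T] = \bigl| \{ R : R \subseteq S \cap T \} \bigr| = 2^{|S \cap T|},
\]
and modulo $2$ this equals $1$ when $S \cap T = \emptyset$ and $0$ otherwise, i.e.\ exactly $M[S, T]$. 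Since $M$ is thus a product of matrices invertible over $\mathbb{F}_2$, we get $rk(M) = N = \sum_{i=0}^k \binom{n}{i}$; taking $k = n$ gives $rk(M_{\mathrm{DISJ}_{n,n}}) = \sum_{i=0}^n \binom{n}{i} = 2^n$.

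I do not expect a genuine obstacle here — the argument is a short linear-algebra computation — but the one point that needs care is the interaction of the factorization with the cardinality restriction: in the sum defining $(V^\top V)[S,T]$ the index $R$ ranges only over sets of size at most $k$, and one must observe that this does not truncate the count, since every $R \subseteq S \cap T$ already satisfies $|R| \le |S \cap T| \le k$. (For the special case $k = n$ one could instead note that $M$ is the $n$-fold Kronecker power of the nonsingular $2 \times 2$ matrix $\bigl(\begin{smallmatrix} 1 & 1 \\ 1 & 0 \end{smallmatrix}\bigr)$ and conclude $rk(M) = 2^n$ directly; but this tensor structure is destroyed once one restricts to $Z_k$, whereas the $V^\top V$ argument treats all $k$ uniformly.)
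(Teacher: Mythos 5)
Your argument is correct: the factorization $M = V^\top V$ with $V$ the inclusion matrix (unitriangular under a cardinality ordering, hence invertible over $\mathbb{F}_2$) is exactly the standard proof of this fact, and the paper itself gives no independent proof but simply defers to the cited Theorem 13.10 of Jukna's book, where essentially this same computation (expanding $\prod_i(1+x_iy_i)$ over $\mathbb{F}_2$) appears. Your side remark about the cardinality restriction not truncating the sum over $R$ is the right point to check, and it is handled correctly.
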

$\conj_n$ is the complement of $\mathrm{DISJ}_{n, n}$. We can obtain the same bound for $\conj_n$, as follows:
\begin{corollary}\label{thm:rank_of_conj}
	$rk(M_{\conj_n}) = 2^n$.
\end{corollary}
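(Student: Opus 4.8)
The plan is to deduce the corollary from Theorem~\ref{thm:rank_of_disj} in the case $k=n$ (where $Z_n = \{0,1\}^n$) by observing that $\conj_n$ and $\mathrm{DISJ}_{n,n}$ are the same Boolean function after negating every input bit. Indeed, for every $\inp{a}{b}\in\dom$,
\[
    \conj_n\inp{a}{b} = \bigwedge_{i=1}^n (a_i \vee b_i) = \bigwedge_{i=1}^n \bigl(\overline{\overline{a_i}} \vee \overline{\overline{b_i}}\bigr) = \mathrm{DISJ}_{n,n}(\overline{\vec{a}}, \overline{\vec{b}}),
\]
so the identity $\conj_n\inp{x}{y} = \mathrm{DISJ}_{n,n}(\overline{\vec{x}}, \overline{\vec{y}})$ holds identically on $\dom$.

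First I would note that the coordinatewise complementation map $\sigma\colon \{0,1\}^n \to \{0,1\}^n$, $\sigma(\vec{a}) = \overline{\vec{a}}$, is a bijection of $\{0,1\}^n$ onto itself (it is an involution). Since the rows and columns of both $M_{\conj_n}$ and $M_{\mathrm{DISJ}_{n,n}}$ are indexed by all of $\{0,1\}^n$, the displayed identity reads $M_{\conj_n}[\vec{a},\vec{b}] = M_{\mathrm{DISJ}_{n,n}}[\sigma(\vec{a}),\sigma(\vec{b})]$ for all $\vec{a},\vec{b}$; equivalently $M_{\conj_n} = P\,M_{\mathrm{DISJ}_{n,n}}\,P^{\top}$, where $P$ is the permutation matrix of $\sigma$. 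I would then invoke the elementary fact that left- or right-multiplying a matrix by a permutation matrix (that is, permuting its rows or columns) leaves its rank over $\mathbb{F}_2$ unchanged. Combining this with Theorem~\ref{thm:rank_of_disj} for $k=n$, which gives $rk(M_{\mathrm{DISJ}_{n,n}}) = 2^n$, yields $rk(M_{\conj_n}) = rk(M_{\mathrm{DISJ}_{n,n}}) = 2^n$, as claimed.

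I do not expect any genuine obstacle here; the proof is a one-line reduction. The only point that warrants a moment's care is confirming that the two communication matrices are indexed by literally the same ground set, namely the full cube $\{0,1\}^n$ (this uses $Z_n = \{0,1\}^n$), so that $\sigma$ is an honest relabelling of rows and columns rather than a partial correspondence. Note also that $\conj_n$ is \emph{not} the output negation of $\mathrm{DISJ}_{n,n}$ (the informal word ``complement'' in the preceding sentence of the text refers to complementing the \emph{inputs}), so the reduction must go through $\sigma$ rather than through $M \mapsto J - M$, whose rank could in principle differ from $rk(M)$. Alternatively, one could reprove the corollary from scratch by mimicking the triangularity argument behind Theorem~\ref{thm:rank_of_disj}: order the assignments by the subset relation and check that, after a suitable reindexing, $M_{\conj_n}$ is triangular with an all-ones diagonal; but the reduction above is shorter and reuses the cited result verbatim.
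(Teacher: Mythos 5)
Your proof is correct and is the natural way to establish what the paper asserts without argument: the paper only remarks that $\conj_n$ is ``the complement'' of $\disj_{n,n}$ and then states the corollary, and your reduction via bitwise input complementation is exactly the rigorous version of that remark. Your aside that this must be read as the \emph{input} complement (a permutation of rows and columns, which preserves rank exactly) rather than the output negation (which over $\mathbb{F}_2$ replaces $M$ by $J+M$ and a priori only confines the rank to $\{2^n-1,\,2^n\}$) is a correct and worthwhile clarification of the paper's loose wording.
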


We also use well-known facts on the rank. Let $A$ and $B$ be two matrices of same dimensions. We denote by $A+B$ the summation of $A$ and $B$, and by $A\circ B$ the Hadamard product of $A$ and $B$.
\begin{fact}
	For two matrices $A$ and $B$ of same dimensions, we have
	\begin{description}
		\item[(i)] $rk(A + B) \le rk(A) + rk(B)$;
		\item[(ii)] $rk(A \circ B) \le rk(A) \cdot rk(B)$.
	\end{description}
\end{fact}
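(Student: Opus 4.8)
The plan is to establish both inequalities by elementary linear algebra over $\mathbb{F}_2$ (the arguments work verbatim over any field), since both are standard facts whose proofs amount to exhibiting a spanning set of the appropriate size for the column space of the combined matrix.

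For part (i), I would argue via column spaces. Writing $\mathrm{col}(A)$ for the span of the columns of $A$, every column of $A+B$ is the sum of the corresponding columns of $A$ and of $B$, hence lies in $\mathrm{col}(A) + \mathrm{col}(B)$. Therefore $rk(A+B) = \dim \mathrm{col}(A+B) \le \dim\bigl(\mathrm{col}(A) + \mathrm{col}(B)\bigr) \le \dim \mathrm{col}(A) + \dim \mathrm{col}(B) = rk(A) + rk(B)$, using the standard bound on the dimension of a sum of subspaces.

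For part (ii), I would use the rank-one decomposition. Let $r = rk(A)$ and $t = rk(B)$, and write $A = \sum_{i=1}^r u_i v_i^\top$ and $B = \sum_{j=1}^t p_j q_j^\top$ as sums of rank-one matrices (possible since a matrix of rank $k$ is a sum of $k$ rank-one matrices, e.g. via its row-reduced form). The key observation is the pointwise identity $(u v^\top) \circ (p q^\top) = (u \circ p)(v \circ q)^\top$, which one checks on the $(k,\ell)$ entry: both sides equal $u_k v_\ell p_k q_\ell$. Distributing the Hadamard product over the two sums then gives $A \circ B = \sum_{i=1}^r \sum_{j=1}^t (u_i \circ p_j)(v_i \circ q_j)^\top$, a sum of $rt$ rank-one matrices, so by part (i) (or subadditivity of rank applied repeatedly) $rk(A \circ B) \le rt = rk(A)\cdot rk(B)$.

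I do not anticipate a genuine obstacle here; the only point worth stating carefully is that the rank-one decomposition and the dimension-of-sum bound are valid over $\mathbb{F}_2$, which they are, since neither uses any field-specific property. If a shorter route is preferred, one can also cite these as well-known facts (e.g. from \cite{Jukna:11}) and give only the one-line Hadamard identity as the sole nonobvious ingredient.
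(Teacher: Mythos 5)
Your proof is correct. The paper itself offers no proof of this Fact—it is stated as a well-known result and used as a black box—so there is nothing to compare against; your argument (column spaces for subadditivity, and the rank-one decomposition together with the identity $(u v^\top) \circ (p q^\top) = (u \circ p)(v \circ q)^\top$ for the Hadamard bound) is the standard one, and as you note it is valid over $\mathbb{F}_2$ since no field-specific property is used.
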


\if0
Although we made much simplifications in the tasks and the model, we may deduce from these known results why the the reaction time for the feature search is shorter than the one for the conjunction search from the viewpoint of circuit complexity. Let $N$ be the total number of inputs for the feature and conjunction search. Then our model of a neural network can perform the feature search over all the $N$ inputs at once, since the required computational resources is independent of $N$. However, the conjunction search needs almost linear number of neurons. Thus, if the number of available neurons participating the processing is $s$, a neural network can perform the conjunction search over at most $s$ of $N$ inputs, and hence the neural network may be repeatedly applied $N/s$ times, which increases as $N$ does.
\fi

\section{Lower Bound for Threshold Circuits}

In this section, we give the inequality relating the rank of the communication matrix to the size, depth, energy and weight.
\begin{theorem}[Theorem 1 restated]\label{thm:LB_size}
	Let $s, d, e$ and $w$ be integers satisfying $2 \le s, d$, $11\le e$, $1 \le w$. Suppose a threshold circuit $C$ computes a Boolean function of $2n$ variables, and has size $s$, depth $d$, energy $e$, and weight $w$. Then it holds that
	\[
	\log ( rk(M_C) ) \le ed(\log s + \log w + \log n).
	\]
\end{theorem}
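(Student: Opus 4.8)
The plan is to bound $rk(M_C)$ by decomposing $M_C$ into a sum of ``simple'' communication matrices and then invoking the rank inequalities (Fact~1). Fix an input $\inp{a}{b}\in\dom$. By the energy bound at most $e$ gates of $C$ output $1$ on this input, so in particular for every level $l\in[d]$ at most $e$ gates of $G_l$ output $1$. Thus $\inp{a}{b}$ determines an \emph{activity pattern} $\mathbf{S}=(S_1,\dots,S_d)$ with $S_l\subseteq G_l$ and $|S_l|\le e$. I would write $M_C=\sum_{\mathbf{S}}N_{\mathbf{S}}$, the sum ranging over all candidate patterns of this shape, where $N_{\mathbf{S}}[\vec{a},\vec{b}]=1$ iff the true activity pattern on $\inp{a}{b}$ equals $\mathbf{S}$ and $\gtop$ outputs $1$. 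Since exactly one pattern occurs on each input, $\sum_{\mathbf{S}}N_{\mathbf{S}}=M_C$ over $\mathbb{F}_2$, so Fact~1(i) gives $rk(M_C)\le(\#\text{patterns})\cdot\max_{\mathbf{S}}rk(N_{\mathbf{S}})$. The number of patterns is $\prod_{l=1}^{d}\sum_{j=0}^{e}\binom{|G_l|}{j}$; taking logarithms and using $\sum_l|G_l|\le s$ I expect this to contribute at most roughly $ed\log s$.

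The rank of a single $N_{\mathbf{S}}$ is controlled by the observation that fixing $\mathbf{S}$ collapses every gate into an affine threshold test on $\inp{a}{b}$ alone: for $g\in G_l$, whether $g$ fires ``under the assumption that the gates below level $l$ which fire are exactly those in $S_1\cup\dots\cup S_{l-1}$'' is equivalent to the single inequality
\[
	p^x_g(\vec{a})+p^y_g(\vec{b})\ \ge\ t_g-\sum_{h\in S_1\cup\dots\cup S_{l-1},\ h\to g}w_{h,g},
\]
whose right-hand side is a fixed integer. The communication matrix of such an inequality has at most $nw+1$ distinct rows, one per possible value of the integer $p^x_g(\vec{a})\in\{\sum_i w^x_i a_i\}$, hence has $\mathbb{F}_2$-rank at most $nw+1$. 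Writing $N_{\mathbf{S}}$ as a Hadamard product of such threshold matrices (one per gate whose output genuinely influences the computation, together with the test for $\gtop$) and applying Fact~1(ii) bounds $rk(N_{\mathbf{S}})$ by $(nw+1)^m$, where $m$ is the number of relevant tests; the goal is to show $m=O(ed)$ -- at most $e$ active gates per level need to be certified, over the $d$ levels -- which yields $\log rk(N_{\mathbf{S}})=O(ed(\log w+\log n))$.

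The hard part is precisely this last step: arranging the decomposition so that each $N_{\mathbf{S}}$ depends on only $O(ed)$ threshold tests rather than on all $s$ gates. The literal indicator ``the activity pattern is exactly $\mathbf{S}$'' constrains every gate -- each inactive gate must be certified \emph{not} to fire -- giving the hopeless estimate $rk(N_{\mathbf{S}})\le(nw+1)^{s}$. The way out must exploit that a gate outputting $0$ contributes $0$ to the potential of every gate it feeds, so that only the at most $e$ active gates at each level -- and, more delicately, the inactive gates feeding them -- actually affect $\gtop$; carrying this reduction out level by level while still keeping $\sum_{\mathbf{S}}N_{\mathbf{S}}=M_C$ (i.e.\ without losing the property that the correct pattern is the unique one picked out) is the crux, and where the factor $ed$ rather than $e^d$ or $s$ is earned. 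Once $m=O(ed)$ is in hand, what remains is bookkeeping: expand the binomial sums, use $\log(nw+1)\le\log w+\log n+O(1)$ together with the hypothesis $e\ge 11$ to absorb additive constants, and combine the two logarithmic estimates into $\log(rk(M_C))\le ed(\log s+\log w+\log n)$.
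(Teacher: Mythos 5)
Your setup coincides with the paper's: decompose $M_C$ as a sum of indicator matrices of activity patterns (the paper calls them internal representations), bound the number of patterns via the energy constraint, bound the rank of a single threshold test by the number of values of $p^x_g(\vec{a})$ (note this is $2nw+1$, not $nw+1$, since weights may be negative), and combine with Fact~1. But the step you yourself flag as ``the crux'' is a genuine gap, and the escape route you sketch is not the one that works. The difficulty is exactly as you state it: $N_{\mathbf{S}}$ must be the \emph{exact} indicator of the pattern (otherwise $\sum_{\mathbf{S}}N_{\mathbf{S}}\neq M_C$), so every inactive gate must be certified not to fire, and a Hadamard product of only $O(ed)$ threshold matrices cannot express this. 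Your hope that ``only the inactive gates feeding the active ones matter'' does not rescue the decomposition, because dropping the certification of the other inactive gates changes which inputs $N_{\mathbf{S}}$ covers. In fact the target you set yourself --- per-pattern rank $(nw+1)^{O(ed)}$ with the $\log s$ coming only from the pattern count --- is not what the paper achieves either, and I do not see how to reach it.

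The paper's resolution is a truncated inclusion--exclusion, not a smaller Hadamard product. For each level $l$ with active set $Q_l$, it writes the exact indicator as the signed sum $H[Q_l]=\sum_{T}(-1)^{|T|-|Q_l|}M[T]$ over all supersets $T\supseteq Q_l$ of threshold tests at level $l$ with $|T|\le e-1$, where $M[T]$ is the conjunction matrix of the at most $e-1$ tests in $T$ and hence has rank at most $(2nw+1)^{e-1}$. The truncation at $|T|\le e-1$ is sound precisely because of the energy bound: the set of gates that actually fire at any level has size at most $e-1$, so the signed sum telescopes to $0$ on inputs with the wrong pattern and to $1$ on inputs with the right one, even though supersets of size $\ge e$ are omitted. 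One then takes $M_{\mathbf{S}}=H[Q_1]\circ\cdots\circ H[Q_d]$ and applies Fact~1(i) to the signed sum and Fact~1(ii) to the product, paying $\binom{s}{\le e-1}\cdot(2nw+1)^{e-1}\approx\left(\frac{cs}{e-1}\right)^{e-1}(2nw+1)^{e-1}$ per level; the product over $d$ levels gives $(snw)^{O(ed)}$, and this inclusion--exclusion count --- not the count of patterns --- is where most of the $\log s$ in the final bound is earned. Two further corrections to your accounting: the energy bound is global, so the paper uses $|P_1|+\cdots+|P_{d-1}|\le e-1$ (giving only $\binom{s}{\le e-1}$ patterns, about $e\log s$ in the logarithm) rather than your per-level ``at most $e$ per level''; and the restriction to patterns arising on $1$-inputs is what lets one assume the top level contributes a single fixed gate.
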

We prove the the theorem by showing that $M_C$ is a sum of matrices each of which corresponds to an internal representation that arises in $C$. Since $C$ has bounded energy, the number of internal representations is also bounded. We then show by the inclusion-exclusion principle that each matrix corresponding an internal representation has bounded rank. Thus, Fact 1 implies the theorem.
\begin{proof}
	Let $C$ be a threshold circuit that computes a Boolean function of $2n$ variables, and has size $s$, depth $d$, energy $e$ and weight $w$. Let $G$ be a set of the gates in $C$. For $l \in [d]$, let $G_l$ be a set of the gates in $l$-th level of $C$. Without loss of generality, we assume that $G_d = \{ \gtop \}$. We evaluate the rank of $M_C$, and prove that
	\begin{eqnarray}\label{eq:UB_rk(M_C)}
		rk(M_C) \le \left(\frac{\npr\cdot s}{e-1} \right)^{e-1} \! \cdot \left(\left(\frac{\npr\cdot s}{e-1} \right)^{e-1} \! \cdot (2nw+1)^{e-1} \right)^{d-1} \! \cdot (2nw+1)	
	\end{eqnarray}
	where $\npr < 3$. Equation~(\ref{eq:UB_rk(M_C)}) implies that
	\begin{eqnarray*}
		rk(M_C)
		&\le& \left(\frac{\npr\cdot s}{e-1}\cdot (2nw+1) \right)^{(e-1)d}\\
		&\le& \left(snw \right)^{ed},
	\end{eqnarray*}
	where the last inequality holds if $e \ge 11$. Taking the logarithm of the inequality, we obtain the theorem.
	%$rk(M_C) \le s^{e-1}\cdot (s^{e-1}\cdot (nw)^{e-1})^{d-1} \cdot (nw) \le s^{(e-1)d}\cdot (nw)^{(e-1)(d-1)+1} \le (snw)^{(e-1)d}$
	
	Below we verify that Eq.~(\ref{eq:UB_rk(M_C)}) holds. Let $\vec{P} = (P_1, P_2, \dots , P_d)$, where $P_l$ is defined as a subset of $G_l$ for each $l \in [d]$. Given an input $(\vec{a}, \vec{b}) \in \dom$, we say that an \emph{internal representation $\vec{P}$ arises for $(\vec{a}, \vec{b})$} if, for every $l \in [d]$,
	$g (\vec{a}, \vec{b}) = 1$ for every $g \in P_l$, and $g (\vec{a}, \vec{b}) = 0$ for every $g \not\in P_l$. We denote by $\vec{P}^*(\vec{a}, \vec{b})$ the internal representation that arises for $(\vec{a}, \vec{b}) \in \dom$. We then define $\mathcal{P}_1$ as a set of internal representations that arise for $\inp{a}{b}$ such that $\gtop \inp{a}{b} =   1$:
	\[
	\mathcal{P}_1 = \{\vec{P}^*(\vec{a}, \vec{b}) \mid \gtop \inp{a}{b} =   1\}.
	\]
	Note that, for any $\vec{P} = (P_1, P_2, \dots , P_d) \in \mathcal{P}_1(C)$, 
	\[
	|P_1|+ |P_2| + \cdots + |P_{d-1}| \le e-1
	\]
	and $|P_d| = 1$. Thus we have
	\begin{eqnarray}\label{eq:UB_P_1}
		|\mathcal{P}_1| \le \sum_{k=0}^{e-1} {s \choose k} \le \left(\frac{\npr\cdot s}{e-1} \right)^{e-1}.
	\end{eqnarray}
	
	For each $\vec{P} \in \mathcal{P}_1$, let $M_{\vec{P}}$ be a $2^n \times 2^n$ matrix such that, for every $(\vec{a}, \vec{b}) \in \dom$,
	\[
	M_{\vec{P}}(\vec{a}, \vec{b}) = \left\{
	\begin{array}{ll}
		1 & \mbox{ if } \vec{P} = \vec{P}^*(\vec{a}, \vec{b});\\
		0 & \mbox{ if } \vec{P} \neq \vec{P}^*(\vec{a}, \vec{b}).
	\end{array}
	\right.
	\]
	By the definitions of $\mathcal{P}_1$ and $M_{\vec{P}}$, we have
	\[
	M_C = \sum_{\vec{P}\in \mathcal{P}_1} M_{\vec{P}},
	\]
	and hence Fact 1(i) implies that
	\[
	rk(M_C) \le \sum_{\vec{P}\in \mathcal{P}_1} rk(M_{\vec{P}} ).
	\]
	Thus Eq.~(\ref{eq:UB_P_1}) implies that
	\[
	rk(M_{\vec{P}}) \le \left(\frac{\npr\cdot s}{e-1} \right)^{e-1}
	\cdot \max_{\vec{P}\in \mathcal{P}_1} rk(M_{\vec{P}}).
	\]
	We complete the proof by showing that, for any $\vec{P} \in \mathcal{P}_1(C)$, it holds that
	\[
	rk(M_{\vec{P}}) \le \left(\left(\frac{2\npr\cdot s}{e-1} \right)^{e-1}
	\cdot (2nw+1)^{e-1} \right)^{d-1} \cdot (2nw+1).
	\]

	In the following argument, we consider an arbitrary fixed internal representation $\vec{P} = (P_1, P_2, \dots , P_d)$ in $\mathcal{P}_1$.  We call a gate a \emph{threshold function} if the inputs of the gate consists of only $\vec{x}$ and $\vec{y}$. For each $g \in G$, we denote by $\tau[g, \vec{P}]$ a threshold function defined as
	\[
	\tau[g, \vec{P}](\vec{x}, \vec{y}) = \sign \left(p^x_g(\vec{x}) + p^y_g(\vec{y}) +  t_g[\vec{P}]\right).
	\]
	where $t_g[\vec{P}]$ is a threshold of $g$, being assumed that the internal representation $\vec{P}$ arises:
	\[
	t_g[\vec{P}] = \sum_{l = 1}^{{\lev{g}}-1} \sum_{h \in P_{l}} w_{h, g} - t_g.
	\]
	For each $l \in [d]$, we define a set $T_l$ of threshold functions as
	\[
	T_l= \{ \tau[g, \vec{P}] \mid g \in G_l\}.
	\]
	Since every gate in $G_1$ is a threshold function, $T_1$ is identical to $G_1$. 
	
	For any set $T$ of threshold functions, we denote by $M[T]$ a $2^n \times 2^n$ matrix such that, for every $(\vec{a}, \vec{b}) \in \dom$,
	\[
	M[T](\vec{a}, \vec{b}) = \left\{
	\begin{array}{ll}
		1 & \mbox{ if } \forall  \tau\in T, \tau(\vec{a}, \vec{b}) = 1;\\
		0 & \mbox{ if } \exists  \tau \in T, \tau(\vec{a}, \vec{b}) = 0.
	\end{array}
	\right.
	\]
	It is well-known that the rank of $M[T]$ is bounded~\cite{Forster:01,Hajnal:93}, as follows.
	\begin{claim}\label{clm:UB_RankofTP}
		$rk(M[T]) \le (2nw+1)^{|T|}$.
	\end{claim}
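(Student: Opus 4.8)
The plan is to reduce the rank of $M[T]$ to a product of ranks of communication matrices of individual threshold functions, and then bound the rank of each such matrix by a counting argument on the possible values of the $x$-side potential. First I would write $M[T]$ as a Hadamard product: since $M[T](\vec a,\vec b)=1$ precisely when $\tau(\vec a,\vec b)=1$ for every $\tau\in T$, we have $M[T]=\bigcirc_{\tau\in T} M[\{\tau\}]$, the entrywise product of the $|T|$ matrices $M[\{\tau\}]$. By Fact~1(ii) applied $|T|-1$ times, $rk(M[T])\le \prod_{\tau\in T} rk(M[\{\tau\}])$, so it suffices to show $rk(M[\{\tau\}])\le 2nw+1$ for a single threshold function $\tau(\vec x,\vec y)=\sign(p^x_\tau(\vec x)+p^y_\tau(\vec y)+t_\tau[\vec P])$.

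For the single-gate bound I would argue as follows. The quantity $p^x_\tau(\vec a)=\sum_{i=1}^n w^x_i a_i$ is an integer whose absolute value is at most $nw$, since each $|w^x_i|\le w$ and each $a_i\in\{0,1\}$; hence $p^x_\tau(\vec a)$ takes at most $2nw+1$ distinct values over all $\vec a\in\{0,1\}^n$. Partition the rows of $M[\{\tau\}]$ into at most $2nw+1$ classes according to the value $v=p^x_\tau(\vec a)$. Within one class, $\tau(\vec a,\vec b)=\sign(v+p^y_\tau(\vec b)+t_\tau[\vec P])$ depends only on $\vec b$, so all rows in that class are identical. Therefore $M[\{\tau\}]$ has at most $2nw+1$ distinct rows, and its rank over $\mathbb F_2$ is at most $2nw+1$. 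Combining with the Hadamard bound gives $rk(M[T])\le (2nw+1)^{|T|}$.

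The only real subtlety — and the step I would be most careful about — is that $M[T]$ is genuinely the Hadamard product of the $M[\{\tau\}]$, which relies on the entries being $0/1$-valued so that ``$1$ iff all factors are $1$'' is exactly the entrywise product; this is immediate here, but it is worth stating explicitly since the rank is taken over $\mathbb F_2$ and one wants the product identity to hold over the integers (equivalently over $\mathbb F_2$, as all entries lie in $\{0,1\}$). Everything else is routine: the weight bound $|p^x_\tau(\vec a)|\le nw$ and the ``distinct rows bound rank'' fact are both elementary. (This is the standard rank bound for threshold functions going back to Hajnal et al.~\cite{Hajnal:93} and Forster~\cite{Forster:01}; I would cite those for the single-gate case and give the Hadamard composition in full.)
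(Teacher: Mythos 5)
Your proof is correct and rests on the same key observation as the paper's, namely that the $x$-side potential $p^x_\tau(\vec{a})$ is an integer taking at most $2nw+1$ distinct values. The paper packages this as a single decomposition of $M[T]$ into disjoint rank-one combinatorial rectangles indexed by tuples $(r_1,\dots,r_{|T|})$ of $x$-potentials (at most $(2nw+1)^{|T|}$ of them), whereas you factor $M[T]$ as a Hadamard product of the single-gate matrices and bound each factor by its number of distinct rows; the two arguments are interchangeable and yield the identical bound.
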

	We give a proof of the claim in Appendix for completeness.
	
	For each $l\in [d]$, based on $P_l$ in $\vec{P}$, we define a set $Q_l$ of threshold functions as
	\[
	Q_l= \{ \tau[g, \vec{P}] \mid g \in P_l\} \subseteq T_l
	\]
	and a family $\mathcal{P}(Q_l)$ of sets $T$ of threshold functions as
	\[
	\mathcal{T}(Q_l) = \{ T \subseteq T_l \mid  Q_l \subseteq T \mbox{ and } |T| \le e-1\}.
	\]
	Following the inclusion-exclusion principle, we define a $2^n \times 2^n$ matrix 
	\[
	H[Q_l] = \sum_{T \in \mathcal{T}(Q_l)} (-1)^{|T| - |Q_l|} M[T].
	\]
	We can show that $M_{\vec{P}}$ is expressed as the Hadamard product of $H[Q_1], H[Q_2], \dots , H[Q_d]$: 
	\begin{claim}\label{clm:M_P=Prod_H}
		\[
		M_{\vec{P}} = H[Q_1] \circ H[Q_2] \circ \dots \circ H[Q_d].
		\]
	\end{claim}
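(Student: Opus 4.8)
The plan is to verify the claimed identity $M_{\vec P} = H[Q_1] \circ \cdots \circ H[Q_d]$ entrywise, that is, to fix an arbitrary input $(\vec a, \vec b) \in \dom$ and show that both sides agree at $(\vec a, \vec b)$. Since all matrices here are $0/1$ or signed integer matrices indexed by $(\vec a, \vec b)$, it suffices to compute each side as a number and check equality. The left-hand side $M_{\vec P}(\vec a, \vec b)$ is $1$ iff the internal representation that arises for $(\vec a, \vec b)$ is exactly $\vec P$, and $0$ otherwise; so the goal reduces to showing that the Hadamard product on the right evaluates to $1$ in precisely that case.

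First I would pin down what $H[Q_l](\vec a, \vec b)$ computes for a single level $l$. The key observation is that, once we have fixed $\vec P$ and defined the threshold functions $\tau[g,\vec P]$, the value $\tau[g,\vec P](\vec a,\vec b)$ equals the true output $g(\vec a,\vec b)$ of $g$ in $C$ \emph{provided} the gates in the lower levels behave according to $\vec P$ on input $(\vec a,\vec b)$; this is just the definition of $t_g[\vec P]$. So the plan is to argue level by level, inducting on $l$: assuming $\vec P^*(\vec a,\vec b)$ agrees with $\vec P$ on levels $1,\dots,l-1$, the functions $\tau[g,\vec P]$ for $g \in G_l$ report the genuine level-$l$ outputs. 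Then $M[T](\vec a,\vec b)$, for $T \in \mathcal T(Q_l)$, is $1$ iff all gates in $T$ output $1$; and the signed sum $H[Q_l](\vec a,\vec b) = \sum_{T \in \mathcal T(Q_l)}(-1)^{|T|-|Q_l|}M[T](\vec a,\vec b)$ is a standard inclusion–exclusion expression. I would show it evaluates to $1$ exactly when the set of level-$l$ gates outputting $1$ is precisely $P_l$, and to $0$ otherwise. Concretely: if some gate in $Q_l$ outputs $0$, every term vanishes, giving $0$; otherwise let $U \supseteq Q_l$ be the set of level-$l$ gates outputting $1$, and the sum collapses to $\sum_{Q_l \subseteq T \subseteq U,\, |T| \le e-1}(-1)^{|T|-|Q_l|}$, which telescopes to $\true{U = Q_l}$ — here one must note that the cardinality constraint $|T| \le e-1$ is harmless because $|U| \le e-1$ always holds (the total energy bound), so the constraint never truncates the relevant sum.

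The main obstacle, and the point deserving the most care, is the interplay between the level-by-level inductive argument and the fact that $H[Q_l]$ is defined using only threshold functions $\tau[g,\vec P]$ whose lower-level thresholds are hard-wired according to $\vec P$ — not according to the actual behaviour on $(\vec a,\vec b)$. One has to be careful that when $\vec P^*(\vec a,\vec b)$ disagrees with $\vec P$ at some level $l_0$, the product still correctly returns $0$: the factor $H[Q_{l_0}](\vec a,\vec b)$ will be $0$ because either a gate of $Q_{l_0}$ fails to fire or a gate outside $P_{l_0}$ fires, and in either case the argument above (applied with the $\tau$'s, which at level $l_0$ still compute genuine outputs since levels $<l_0$ agreed) gives $0$; the possibly-garbage values of $H[Q_l]$ at higher levels $l > l_0$ do not matter since a single zero factor kills the Hadamard product. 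So the clean way to organise it is: let $l_0$ be the least level where $\vec P^*(\vec a,\vec b)$ differs from $\vec P$ (or $l_0 = d+1$ if they never differ); for $l < l_0$ the factor is $1$, the factor at $l_0$ (if $l_0 \le d$) is $0$, and when $l_0 = d+1$ all $d$ factors are $1$, so the product is $\true{l_0 = d+1} = M_{\vec P}(\vec a,\vec b)$, which is exactly the claim.
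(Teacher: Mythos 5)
Your proposal is correct and follows essentially the same route as the paper's proof: an entrywise check organized around the first level $l_0$ at which $\vec{P}^*(\vec{a},\vec{b})$ departs from $\vec{P}$, using the fact that the $\tau[g,\vec{P}]$ report the genuine gate outputs up to that level, and collapsing the signed sum at level $l_0$ to $0$ by inclusion--exclusion (and to $1$ at every level when no disagreement occurs). The only spot worth a second look is your parenthetical that $|U|\le e-1$ always holds: the energy bound by itself only gives $|U|\le e$, with equality conceivable when all firing gates sit at a single level, but the paper's own proof makes the same implicit assumption when it replaces the sum over $\mathcal{T}(Q_l)$ by the unrestricted sum over $Q_l\subseteq T\subseteq Q^*_l$.
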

	The proof of the claim is given in Appendix.
	
	We finally evaluate $rk(M_{\vec{P}})$.
	Claim~\ref{clm:M_P=Prod_H} and Fact 1(ii) imply that
	\begin{equation}\label{eq:Mp_le_prod_H[t_l]}
		rk(M_\vec{P}) = rk\left( 	H[Q_1] \circ H[Q_2] \circ \dots \circ H[Q_d] \right) \le \prod_{l=1}^d rk(H[Q_l]).
	\end{equation}
	Since
	\[
	|\mathcal{T}(Q_l)| \le \left(\frac{\npr\cdot s}{e-1} \right)^{e-1}
	\]
	Fact 1(i) and Claim~\ref{clm:UB_RankofTP} imply that
	\begin{eqnarray}\label{eq:H[T^*_l]<sr^e}
		rk(H[Q_l])
		&\le& \sum_{T \in \mathcal{P}(Q_l)}  rk( M[T] ) \nonumber\\
		&\le& \left(\frac{\npr\cdot s}{e-1} \right)^{e-1}  \cdot (2nw+1)^{e-1}
	\end{eqnarray}
	for every $l \in [d-1]$, and
	\begin{eqnarray}\label{eq:H[T^*_d]<r}
		rk(H[Q_d]) \le 2nw+1.
	\end{eqnarray}
	Equations~(\ref{eq:Mp_le_prod_H[t_l]})-(\ref{eq:H[T^*_d]<r}) imply that
	\[
	rk(M_\vec{P}) \le \left(\left(\frac{\npr\cdot s}{e-1} \right)^{e-1}
	\cdot (2nw+1)^{e-1} \right)^{d-1} \cdot (2nw+1)
	\]
	as desired. We thus have verified Eq.~(\ref{eq:UB_rk(M_C)}).
\end{proof}

Combining Theorem~\ref{thm:rank_of_conj} and Corollary \ref{thm:LB_size}, we obtain the following corollary:
\begin{corollary}\label{cor:LB_conj}
	Let $s, d, e$ and $w$ be integers satisfying $2 \le s, d$, $10 \le e$, $1 \le w$. Suppose a threshold circuit $C$ of size $s$, depth $d$, energy $e$, and weight $w$ computes $\conj_n$. Then it holds that
	\[
	n \le ed(\log s + \log w + \log n).
	\]
	Equivalently, we have $2^{n/(ed)}/(nw) \le s$.
\end{corollary}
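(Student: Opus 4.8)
The plan is that this follows directly by instantiating Theorem~\ref{thm:LB_size} for the function $\conj_n$. Suppose a threshold circuit $C$ of size $s$, depth $d$, energy $e$ and weight $w$ computes $\conj_n$. Since $\conj_n$ is a Boolean function of $2n$ variables with the natural separation of its inputs into $\vec{x}$ and $\vec{y}$, Theorem~\ref{thm:LB_size} applies verbatim and yields
\[
\log(rk(M_C)) \le ed(\log s + \log w + \log n).
\]
Because $C$ computes $\conj_n$, its communication matrix is $M_{\conj_n}$, so $rk(M_C) = rk(M_{\conj_n})$, and Corollary~\ref{thm:rank_of_conj} gives $rk(M_{\conj_n}) = 2^n$. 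Substituting $\log(rk(M_C)) = n$ into the displayed inequality gives $n \le ed(\log s + \log w + \log n)$, the first assertion.

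For the ``equivalently'' clause I would just rearrange: dividing by $ed$ gives $n/(ed) \le \log s + \log w + \log n = \log(snw)$, and exponentiating base two yields $2^{n/(ed)} \le snw$, i.e. $2^{n/(ed)}/(nw) \le s$. If one prefers to take the separation of the inputs of $\conj_n$ to be the one maximizing the rank (as permitted in Section~\ref{sec:matrx}), nothing changes, since the natural pairing of $x_i$ with $y_i$ already attains the maximum possible value $2^n$.

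I do not expect a genuine obstacle: the corollary is an immediate consequence of Theorem~\ref{thm:LB_size} together with Corollary~\ref{thm:rank_of_conj}, and Fact~1 is not even invoked. The only point requiring a moment of care is the hypothesis on the energy: Theorem~\ref{thm:LB_size} as restated in Section~3 is stated for $e \ge 11$, whereas the corollary claims $e \ge 10$. To close this gap I would either sharpen the final estimate inside the proof of Theorem~\ref{thm:LB_size} — namely check that $(\npr s/(e-1))^{e-1}(2nw+1)^{e-1} \le (snw)^{e-1}$ and the passage to $(snw)^{ed}$ still go through at $e = 10$ using the precise constant $\npr < 3$ (one may take $\npr \approx 2.718$ in the binomial-sum bound $\sum_{k \le e-1}\binom{s}{k} \le (\npr s/(e-1))^{e-1}$) — or simply restate the corollary with the matching bound $e \ge 11$.
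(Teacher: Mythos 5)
Your proof is correct and matches the paper's own derivation: the corollary is obtained exactly by combining Theorem~\ref{thm:LB_size} with the rank bound $rk(M_{\conj_n}) = 2^n$ from Corollary~\ref{thm:rank_of_conj} and then rearranging to get $2^{n/(ed)} \le snw$. Your remark about the mismatch between the hypothesis $10 \le e$ in the corollary and $11 \le e$ in the restated theorem points to a genuine inconsistency in the paper itself rather than a gap in your argument.
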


Theorem~\ref{thm:LB_size} implies lower bounds for other Boolean functions with linear rank.
For example, consider another Boolean function $\eq_n$ asking if $\vec{x} = \vec{y}$:
\[
\eq_n\inp{x}{y} = \bigwedge_{i=1}^n \overline{x_i \oplus y_i}
\]
Since $M_{\eq_n}$ is the identity matrix with full rank, we have the same lower bound.
\begin{corollary}\label{cor:LB_eq}
	Let $s, d, e$ and $w$ be integers satisfying $2 \le s, d$, $10 \le e$, $1 \le w$. Suppose a threshold circuit $C$ of size $s$, depth $d$, energy $e$, and weight $w$ computes $\eq_n$. Then it holds that
	\[
	n \le ed(\log s + \log w + \log n).
	\]
	Equivalently, we have $2^{n/(ed)}/(nw) \le s$.
\end{corollary}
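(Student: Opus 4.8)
The plan is to apply Theorem~\ref{thm:LB_size} directly, once the rank of the communication matrix of $\eq_n$ is identified. First I would observe that, by definition, $M_{\eq_n}[\vec{a}, \vec{b}] = \eq_n(\vec{a}, \vec{b})$ equals $1$ precisely when $\vec{a} = \vec{b}$ and equals $0$ otherwise; that is, $M_{\eq_n}$ is the $2^n \times 2^n$ identity matrix. Over $\mathbb{F}_2$ the identity matrix has full rank, so $rk(M_{\eq_n}) = 2^n$ and hence $\log(rk(M_{\eq_n})) = n$.

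Next, since the hypotheses of the corollary ($2 \le s, d$, $10 \le e$, $1 \le w$) are exactly those required to invoke Theorem~\ref{thm:LB_size} for the circuit $C$ computing $\eq_n$ (whose communication matrix $M_C$ is $M_{\eq_n}$), I would apply that theorem to conclude $n = \log(rk(M_C)) \le ed(\log s + \log w + \log n)$. Dividing through by $ed$ gives $n/(ed) \le \log s + \log w + \log n = \log(snw)$, and exponentiating both sides yields $2^{n/(ed)} \le snw$, i.e.\ $2^{n/(ed)}/(nw) \le s$, as claimed.

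There is essentially no obstacle: the only thing to check is that the identity matrix has full rank over $\mathbb{F}_2$, which is immediate, and the closing rearrangement is routine algebra. The argument is the verbatim analogue of the proof of Corollary~\ref{cor:LB_conj}, with the appeal to Theorem~\ref{thm:rank_of_conj} replaced by this trivial rank computation for the identity matrix.
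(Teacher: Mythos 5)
Your proposal is correct and follows exactly the paper's route: identify $M_{\eq_n}$ as the $2^n \times 2^n$ identity matrix of full rank over $\mathbb{F}_2$, apply Theorem~\ref{thm:LB_size}, and rearrange. (The only wrinkle, inherited from the paper itself rather than introduced by you, is the mismatch between the hypothesis $10 \le e$ here and $11 \le e$ in the restated Theorem~\ref{thm:LB_size}.)
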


\if0
\begin{proof}
	Theorems~\ref{thm:rank_of_conj} and \ref{thm:LB_size} imply that
	\[
	n \le ed(\log s + \log w + \log n).
	\]
	Simple calculation show that
	\[
	n/ed \le \log (swn), 
	\]
	and hence
	\[
	2^{n/ed} \le snw.
	\]
\end{proof}
\fi

\section{Tightness of the Lower Bound}

In this section, we show that the lower bound given in Theorem~\ref{thm:LB_size} is almost tight if the depth and energy are small.

\subsection{Definitions}

\if0
Let $z$ be a positive integer. We define a decision list of length $z$ as a list $L = ((f_1, v_1), (f_2, v_2), \dots , (f_{z+1}, v_{z+1}))$, where $f_1, f_2, \dots , f_z$ are arbitrary Boolean functions over $\dom$, $f_{z+1}$ is the constant-one function and $v_1, v_2, \dots , v_{z+1}$ are Boolean values. For a given input assignment $(\vec{a}, \vec{b}) \in \dom$, the output of $L$ for $(\vec{a}, \vec{b})$ is defined as $v_{j^*}$, where
\[
j^* = \min \{ j \in [z] \mid f_j(\vec{a}, \vec{b}) = 1 \}.
\]
We denote by $B_j$ a set of indices $i$ such that $x_i$ or $y_i$ are fed into $f_j$. We say that $L$ is $z$-piecewise if $B_1, B_2, \dots , B_z$ constitute a partition of $[n]$, and $|B_j| \le \lceil n/z \rceil$ for every $j \in [z]$.
\fi

Let $z$ be a positive integer, and $f$ be a Boolean function of $2n$ variables. We say that $f$ is $z$-piecewise with $f_1, f_2, \dots , f_z$ if the following conditions are satisfied: Let
\[
B_j = \{ i \in [n]\mid \mbox{$x_i$ or $y_i$ are fed into $f_j$} \},
\]
then
\begin{description}
	\item[(i)] $B_1, B_2, \dots , B_z$ compose a partition of $[n]$;
	\item[(ii)] $|B_j| \le \lceil n/z \rceil$ for every $j\in [z]$;
	%	\item[(iii)] none of $f_1, f_2, \dots , f_z$ is a constant function; and
	\item[(iii)]
	\[
	f(\vec{x}, \vec{y}) = \bigvee_{j=1}^z f_j(\vec{x}, \vec{y}) \quad \mbox{or} \quad f(\vec{x}, \vec{y}) = \overline{\bigvee_{j=1}^z f_j(\vec{x}, \vec{y})}.
	\]
\end{description}

We say that a set of threshold gates sharing input variables is a neural set, and a neural set is selective if at most one of the gates in the set outputs one for any input assignment. A selective neural set $S$ computes a Boolean function $f$ if for every assignment in $f^{-1}(0)$, no gates in $S$ outputs one, while for every assignment in $f^{-1}(1)$, exactly one gate in $S$ outputs one. We define the size and weight of $S$ as $|S|$ and $\max_{g\in S} w_g$, respectively.

By a DNF-like construction, we can obtain a selective neural set of exponential size that computes $f$ for any Boolean function $f$.
\begin{theorem}\label{thm:neuralset}
	For any Boolean function $f$ of $n$ variables, there exists a selective neural set of size $2^n$ and weight one that computes $f$.
\end{theorem}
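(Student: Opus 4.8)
The plan is a DNF-style construction in which each satisfying assignment of $f$ is detected by a single threshold gate, the only twist being that each such gate must have weight one. For every $\vec{a} = (a_1, \dots , a_n) \in f^{-1}(1)$, I would introduce a threshold gate $g_{\vec{a}}$ acting on the shared inputs $x_1, \dots , x_n$, with weight $+1$ for $x_i$ when $a_i = 1$, weight $-1$ for $x_i$ when $a_i = 0$, and threshold $t_{\vec{a}} = \sum_{i=1}^n a_i$. All weights lie in $\{-1, +1\}$, so $w_{g_{\vec{a}}} = 1$, and all quantities are integers as required.

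The one step that needs an argument is that $g_{\vec{a}}$ fires exactly on $\vec{a}$. Its potential is $p_{g_{\vec{a}}}(\vec{x}) = \sum_{i : a_i = 1} x_i - \sum_{i : a_i = 0} x_i$; the first sum is at most $\sum_i a_i$ with equality iff $x_i = 1$ whenever $a_i = 1$, and the second sum is at least $0$ with equality iff $x_i = 0$ whenever $a_i = 0$. Hence $p_{g_{\vec{a}}}(\vec{x}) \le t_{\vec{a}}$ for all $\vec{x} \in \{0,1\}^n$, with equality precisely when $\vec{x} = \vec{a}$, so $g_{\vec{a}}(\vec{x}) = 1$ if and only if $\vec{x} = \vec{a}$.

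Taking $S = \{ g_{\vec{a}} \mid \vec{a} \in f^{-1}(1) \}$ (padded, if an exact count of $2^n$ gates is desired, by gates with all weights $+1$ and threshold $n+1$, which never fire and still have weight one), we obtain a neural set of size $|f^{-1}(1)| \le 2^n$ and weight one. For $\vec{x} \in f^{-1}(0)$ no gate of $S$ fires, since $\vec{x}$ differs from every $\vec{a} \in f^{-1}(1)$; for $\vec{x} \in f^{-1}(1)$ the gate $g_{\vec{x}}$ is the unique gate of $S$ that fires. Thus $S$ is selective and computes $f$. There is no genuine obstacle here; the only point worth isolating is the observation that a single $\pm 1$-weighted threshold gate can single out one vertex of the Boolean cube, which is exactly the uniqueness of the maximizer of the linear form above.
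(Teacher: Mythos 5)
Your construction is correct and is exactly the ``DNF-like construction'' the paper invokes for this theorem (the paper itself gives no further proof): one $\pm1$-weight gate per satisfying assignment, firing precisely at that assignment because the linear form $\sum_{i:a_i=1}x_i-\sum_{i:a_i=0}x_i$ attains its maximum $\sum_i a_i$ uniquely at $\vec{x}=\vec{a}$. Nothing is missing; the selectivity and weight-one claims follow immediately as you argue.
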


\subsection{Upper Bounds}

The following proposition shows that we can a construct threshold circuits of small energy for piecewise functions.
\begin{lemma}\label{lem:UB}
	Let $e$ and $d$ be integers satisfying $2 \le e$ and $2 \le d$, and $z = (e-1)(d-1)$. Suppose $f:\dom \to \{ 0, 1\}$ is a $z$-piecewise function with $f_1, f_2, \dots , f_z$. If $f_j$ is computable by a selective neural set of size at most $s'$ and weight $w'$ for every $j \in [z]$, $f$ is computable by a threshold circuit of size
	\[
	s \le z\cdot s' + 1,
	\]
	depth $d$, energy $e$ and weight
	\[
	w \le  \frac{2n}{z}\cdot w'.
	\]
\end{lemma}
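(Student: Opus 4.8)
The plan is to build $C$ by spreading the $z=(e-1)(d-1)$ selective neural sets across the first $d-1$ levels, packing exactly $e-1$ of them onto each level, and wiring strong inhibitory connections from \emph{every} lower level, so that on any input only the neural sets on the lowest level that contains an ``active'' piece stay switched on. Concretely, partition $[z]$ into consecutive blocks $B^{(1)},\dots,B^{(d-1)}$ of size $e-1$, say $B^{(l)}=\{(l-1)(e-1)+1,\dots,l(e-1)\}$, and for each $l\in[d-1]$ and each $j\in B^{(l)}$ place a copy of the selective neural set $S_j$ computing $f_j$ on level $l$ of $C$: a gate $g'$ of $C$ that copies $g\in S_j$ keeps $g$'s weights on $\vec{x},\vec{y}$ and $g$'s threshold $t_g$, and in addition receives a weight $-W$ from every gate of $C$ whose level is smaller than $l$ (the integer $W$ is fixed below). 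On level $d$ we place the output gate $\gtop$: if $f=\bigvee_{j=1}^z f_j$, let $\gtop$ be the OR of all gates on levels $1,\dots,d-1$ (all weights $1$, threshold $1$); if $f=\overline{\bigvee_{j=1}^z f_j}$, let it be their NOR (all weights $-1$, threshold $0$).

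To fix $W$ (and assuming, as we may, that $w'\ge 1$ and $f$ is non-constant), note that each $f_j$ reads only the variables indexed by $B_j$, so we may take every $g\in S_j$ to read only the at most $2|B_j|\le 2\lceil n/z\rceil$ variables indexed by $B_j$; its linear form $p^x_g(\vec{x})+p^y_g(\vec{y})$ then takes values in an integer interval $[r^-_g,r^+_g]$ with $r^+_g-r^-_g\le 2\lceil n/z\rceil w'$, and after discarding constant gates we may assume $r^-_g<t_g$. Setting $W=2\lceil n/z\rceil w'$ makes $W$ exceed $r^+_g-t_g$, which has two consequences for a copy $g'$ on level $l$: if no gate of $C$ below level $l$ fires, the $-W$ terms all vanish and $g'$ outputs $\sign(p^x_g(\vec{x})+p^y_g(\vec{y})-t_g)$, exactly the value $g$ takes inside $S_j$ evaluated in isolation; and if at least one gate below level $l$ fires, the accumulated $-W$ forces the potential of $g'$ strictly below $t_g$, so $g'$ outputs $0$ regardless of $\vec{x},\vec{y}$. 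Now fix an input $(\vec{a},\vec{b})$, let $j^{*}$ be the least index with $f_{j^{*}}(\vec{a},\vec{b})=1$ and let $B^{(l^{*})}$ be the block containing it. Every block $B^{(l)}$ with $l<l^{*}$ consists of indices smaller than $j^{*}$, so each such $f_j$ vanishes; by correctness of the selective neural sets and the first consequence above, an induction on the level shows that no gate on levels $1,\dots,l^{*}-1$ fires. Then every copy on level $l^{*}$ sees no firing gate below it and reproduces its value inside $S_j$, so by selectivity each neural set on level $l^{*}$ contributes exactly one firing gate when its $f_j=1$ and none otherwise; and by the second consequence every gate on a level $>l^{*}$ is inhibited and stays silent. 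Hence among levels $1,\dots,d-1$ the firing gates all lie in the single block $B^{(l^{*})}$, at most one per neural set and so at most $e-1$ in total, which together with $\gtop$ means at most $e$ gates of $C$ fire; and $\gtop$ reads off $\bigvee_j f_j(\vec{a},\vec{b})$, respectively its complement, correctly. When no such $j^{*}$ exists the same induction makes every gate on levels $1,\dots,d-1$ silent, and $\gtop$ again outputs the right value.

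The rest is bookkeeping: the depth is $d$; the size is $\sum_{l=1}^{d-1}\sum_{j\in B^{(l)}}|S_j|+1\le (d-1)(e-1)s'+1=zs'+1$; and every weight of $C$ is bounded in absolute value by $W=2\lceil n/z\rceil w'$, which equals $\frac{2n}{z}w'$ when $z\mid n$ and otherwise differs from it only by the rounding in $\lceil n/z\rceil$, so the weight of $C$ is at most $\frac{2n}{z}w'$. The step I expect to be the main obstacle is the simultaneous-inhibition argument of the second paragraph: one must verify that a single firing gate on the active level $l^{*}$ really switches off \emph{all} higher levels while leaving level $l^{*}$ itself untouched, and it is precisely the bound $|B_j|\le\lceil n/z\rceil$ coming from the piecewise structure that lets the inhibitory weight be large enough for the former yet small enough to meet the claimed weight bound.
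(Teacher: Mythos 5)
Your construction is essentially identical to the paper's: the same partition of the $z$ pieces into $d-1$ layers of $e-1$ selective neural sets each, the same large inhibitory weight $-2nw'/z$ from all lower-level gates (so that only the lowest ``active'' layer fires, at most one gate per neural set), and the same OR/NOR top gate, with the same size, depth, energy and weight accounting. The proposal is correct, modulo the same divisibility-of-$n$-by-$z$ rounding that the paper also glosses over.
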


Clearly, $\conj_n$ is $z$-piecewise, and so the lemma gives our upper bound for $\conj_n$.
\begin{theorem}[Theorem~\ref{thm:Intro_UP} restated]\label{thm:UB_conj}
	For any integers $e$ and $d$ such that $2 \le e$ and $2 \le d$, $\conj_n$ is computable by a threshold circuit of size
	\[
	s \le (e-1)(d-1)\cdot 2^{\frac{n}{(e-1)(d-1)}}.
	\]
	depth $d$, energy $e$ and weight
	\[
	w \le \left( \frac{n}{(e-1)(d-1)} \right)^2.
	\]
\end{theorem}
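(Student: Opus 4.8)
The plan is to derive Theorem~\ref{thm:UB_conj} from Lemma~\ref{lem:UB}: I give a good piecewise decomposition of $\conj_n$ and then build, for each piece, a selective neural set that is small and light enough. Put $z = (e-1)(d-1)$, fix any partition $B_1, \dots, B_z$ of $[n]$ with $|B_j| \le \lceil n/z\rceil$ for every $j$, and set
\[
f_j(\vec{x}, \vec{y}) \;=\; \overline{\bigwedge_{i \in B_j}(x_i \vee y_i)} \;=\; \bigvee_{i \in B_j}\bigl(\overline{x_i} \wedge \overline{y_i}\bigr).
\]
Since $\bigvee_{j=1}^z f_j = \bigvee_{i=1}^n(\overline{x_i}\wedge\overline{y_i}) = \overline{\conj_n}$, and $f_j$ depends only on the variables indexed by $B_j$, conditions (i)--(iii) of the definition of $z$-piecewise hold, so $\conj_n$ is $z$-piecewise with $f_1, \dots, f_z$. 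By Lemma~\ref{lem:UB} it then suffices to compute each $f_j$ by a selective neural set of size at most $2^{\lceil n/z\rceil - 1}$ and of weight bounded by an absolute constant.

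Fix $j$, write $m = |B_j| \le \lceil n/z\rceil$, and relabel $B_j$ as $\{1,\dots,m\}$. Applying Theorem~\ref{thm:neuralset} to $f_j$ as a function of its $2m$ relevant variables is not good enough -- it gives $2^{2m}$ gates, too many by roughly a square -- and the tempting alternative of one gate per index $k \in [m]$ (firing exactly when $k$ is the least index with $x_k = y_k = 0$) fails, because the side condition ``$x_{k'} \vee y_{k'}$ for all $k' < k$'' is a monotone CNF and hence not a single threshold function for $k \ge 3$. The construction I would use keeps exactly one of those ``nonzero'' clauses loose -- it contributes only the term $x_1 + y_1$, which a threshold gate can absorb -- and, for each of the remaining $k-2$ clauses, splits on which of the two disjoint, threshold-detectable classes $x_{k'} = 1$ and $x_{k'} = 0 \wedge y_{k'} = 1$ -- whose union is exactly $x_{k'} \vee y_{k'}$ -- the pair $(x_{k'}, y_{k'})$ lies in. Concretely, for each $k \in [m]$ and each of the $2^{\max(0,k-2)}$ class patterns of the pairs $2,\dots,k-1$ I take one gate that fires precisely when $x_k = y_k = 0$, $x_1 \vee y_1$ (when $k \ge 2$), and each pair $k'$ with $2 \le k' < k$ is in its prescribed class; this is a conjunction of threshold literals together with the single inequality $x_1 + y_1 \ge 1$, realizable with integer weights of absolute value at most two. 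The set is selective: for any input in $f_j^{-1}(1)$ the least index $k$ with $x_k = y_k = 0$ is determined, and then the class of each earlier pair is read off $x_{k'}$, so exactly one gate fires; and no gate fires on $f_j^{-1}(0)$. The number of gates is $\sum_{k=1}^m 2^{\max(0,k-2)} = 2^{m-1} \le 2^{\lceil n/z\rceil - 1}$. This step -- driving the size of each piece's selective neural set down to $2^{\lceil n/z\rceil - 1}$ while keeping the weight constant -- is the part I expect to be the main obstacle.

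With $s' = 2^{\lceil n/z\rceil - 1}$ and $w' \le 2$, Lemma~\ref{lem:UB} (for $z = (e-1)(d-1)$) produces a threshold circuit for $\conj_n$ of depth $d$, energy $e$, size $s \le z\cdot 2^{\lceil n/z\rceil - 1} + 1 \le (e-1)(d-1)\cdot 2^{n/((e-1)(d-1))}$ (using $\lceil n/z\rceil - 1 \le n/z$), and weight $w \le (2n/z)\cdot 2 = O\!\bigl(n/((e-1)(d-1))\bigr)$, which lies comfortably within the claimed bound $\bigl(n/((e-1)(d-1))\bigr)^2$, yielding the bounds of Theorem~\ref{thm:UB_conj}.
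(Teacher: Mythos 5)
Your proof is correct, and it reaches the theorem by a genuinely different construction than the paper's, even though both routes go through Lemma~\ref{lem:UB}. The paper takes the uncomplemented form of condition (iii), writing $\conj_n = \bigvee_j f_j$ with $f_j = \bigvee_{i\in B_j} x_i\wedge y_i$, and builds each selective neural set by enumerating the exact support $B\subseteq B_j$ of $\vec{x}$ restricted to the block: the gate $g^B_j$ fires iff $\vec{x}|_{B_j}$ has support exactly $B$ and some $y_i$ with $i\in B$ equals one, which costs $2^{|B_j|}$ gates of weight $|B_j|\le n/z$ each. You instead use the complemented form $\conj_n=\overline{\bigvee_j f_j}$ with $f_j=\bigvee_{i\in B_j}\overline{x_i}\wedge\overline{y_i}$ (which matches the displayed definition $\bigwedge_i(x_i\vee y_i)$ more literally than the paper's own decomposition does), and your selective set is indexed by the first both-zero position together with a two-way classification of the earlier pairs, with one clause left ``loose'' as $x_1+y_1\ge 1$; the step you correctly identified as the main obstacle --- realizing each such conjunction as a single threshold gate --- works exactly as you say, with weight $2$ on the exact-match variables and weight $1$ on $x_1,y_1$. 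What each approach buys: yours uses half as many gates per piece ($2^{|B_j|-1}$ versus $2^{|B_j|}$) and, more substantially, constant gate weight, so the final circuit weight is $O(n/z)$ rather than the paper's $2(n/z)^2$; the paper's support-enumeration is perhaps quicker to verify selectivity for. Both derivations share the same harmless slack against the literal constants in the statement (the paper's own proof yields $w\le 2(n/z)^2$, not $(n/z)^2$, and size $zs'+1$ rather than $zs'$; your bounds are strictly better on both counts), so this is an issue with the theorem's stated constants, not with your argument.
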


We can also obtain a similar proposition for $\eq_n$.
\begin{theorem}\label{thm:UB_eq}
	For any integers $e$ and $d$ such that $2 \le e$ and $2 \le d$, $\eq_n$ is computable by a threshold circuit of size
	\[
	s \le (e-1)(d-1)\cdot 2^{\frac{2n}{(e-1)(d-1)}}.
	\]
	depth $d$, energy $e$ and weight
	\[
	w \le \frac{n}{(e-1)(d-1)}.
	\]
\end{theorem}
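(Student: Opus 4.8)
The plan is to closely mirror the proof of Theorem~\ref{thm:UB_conj}: put $\eq_n$ into $z$-piecewise form with $z = (e-1)(d-1)$, realize each piece by a selective neural set, and feed the result into Lemma~\ref{lem:UB}. The only decision is how cheaply to realize a single piece.

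First I would fix the decomposition. Because $\eq_n$ is a conjunction it cannot be written as $\bigvee_j f_j$, so it must satisfy condition~(iii) of the $z$-piecewise definition through the complemented alternative; this is available since
\[
	\eq_n\inp{x}{y} = \bigwedge_{i=1}^n \overline{x_i \oplus y_i} = \overline{\bigvee_{i=1}^n (x_i \oplus y_i)}.
\]
I partition $[n]$ into blocks $B_1, \dots , B_z$ with $|B_j| \le \lceil n/z \rceil$ for every $j$, and set $f_j\inp{x}{y} = \bigvee_{i \in B_j}(x_i \oplus y_i)$. Then the set of indices whose variables feed $f_j$ is exactly $B_j$, so conditions~(i) and~(ii) hold, and $\eq_n = \overline{\bigvee_{j=1}^z f_j}$, so condition~(iii) holds in its complemented form; hence $\eq_n$ is $z$-piecewise with $f_1, \dots , f_z$.

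Next I would estimate the cost of one piece. Each $f_j$ is a Boolean function of the $2|B_j| \le 2\lceil n/z \rceil$ variables $\{x_i, y_i \mid i \in B_j\}$, so Theorem~\ref{thm:neuralset} provides a selective neural set computing $f_j$ of size at most $2^{2|B_j|} \le 2^{2\lceil n/z \rceil}$ and weight $1$. Feeding $s' = 2^{2\lceil n/z \rceil}$ and $w' = 1$ into Lemma~\ref{lem:UB} gives a threshold circuit for $\eq_n$ of depth $d$, energy $e$, size at most $z \cdot 2^{2\lceil n/z \rceil} + 1$, and weight $O(n/z)$. For $z \mid n$ (the general case only changes ceilings) the size is exactly the claimed $(e-1)(d-1)\cdot 2^{2n/((e-1)(d-1))}$; tracking the weight through the construction of Lemma~\ref{lem:UB} with these unit-weight pieces over blocks of size $\le \lceil n/z \rceil$ yields the claimed $n/((e-1)(d-1))$, completing the argument.

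I do not expect a genuine obstacle: once Theorem~\ref{thm:neuralset} and Lemma~\ref{lem:UB} are in hand, this is the $\conj_n$ computation with $\overline{x_i}\wedge\overline{y_i}$ replaced by $x_i \oplus y_i$. The only points that need a sentence of care are that $\eq_n$ must enter the $z$-piecewise definition through the complemented clause, and the routine bookkeeping of ceilings. The mild loss relative to Theorem~\ref{thm:UB_conj} --- exponent $2n/((e-1)(d-1))$ rather than $n/((e-1)(d-1))$ --- is precisely the price of realizing the XOR-type piece $f_j$ by the generic construction of Theorem~\ref{thm:neuralset}, whose size is exponential in the number $2|B_j|$ of input bits rather than in $|B_j|$; the compensation is its unit weight, which is what sharpens the weight bound to $n/((e-1)(d-1))$. (A tailored construction that scans $B_j$ for the first coordinate at which $\vec x$ and $\vec y$ disagree, branching on the two ways to disagree, would give a selective neural set of size $O(2^{|B_j|})$ and weight $1$, improving the size; but the statement follows already from the generic one.)
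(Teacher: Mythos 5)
Your proposal matches the paper's proof: the paper also writes $\eq_n = \overline{\bigvee_{j=1}^z f_j}$ with $f_j = \bigvee_{i\in B_j} x_i \oplus y_i$ over blocks of size $\lceil n/z\rceil$, invokes Theorem~\ref{thm:neuralset} to realize each $f_j$ (a function of $2n/z$ variables) by a selective neural set of size $2^{2n/z}$ and weight one, and then applies Lemma~\ref{lem:UB}. The only (shared) blemish is that Lemma~\ref{lem:UB} as stated gives weight $2nw'/z$ rather than $nw'/z$, a constant-factor slack that the paper itself also elides.
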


\section{Simulating Discretized Circuits}

In this section, we show that any discretized circuit can be simulated using a threshold circuit with a moderate increase in size, depth, energy, and weight. Thus, a similar inequality holds for discretized circuits. Recall that we define energy of discretized circuits differently from those of threshold circuits.

\begin{theorem}[Theorem~\ref{thm:Intro_Dscr} restated]\label{thm:LB_discretizer}
	Let $\dis$ be a discretizer and $\af$ be an activation function such that $\dis \circ \af$ has a silent range. If a $(\dis \circ \af)$-circuit $C$ of size $s$, depth $d$, energy $e$, and weight $w$ computes a Boolean function $f$, then it holds that
	\begin{equation*}\label{eq:Discrized_LB}
		\log ( rk(M_C) ) =O (ed(\log s+  \log w + \log n)^3).
	\end{equation*}
\end{theorem}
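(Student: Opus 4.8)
The plan is to reduce Theorem~\ref{thm:LB_discretizer} to Theorem~\ref{thm:LB_size} by a \emph{simulation lemma}: every $(\dis\circ\af)$-circuit $C$ of size $s$, depth $d$, energy $e$ and weight $w$ can be turned into a threshold circuit $C'$ computing the same Boolean function and having size $s'$, depth $d'$, energy $e'$ and weight $w'$ with $\log s',\log w' = O(L)$, $d' = O(dL)$ and $e' = O(eL)$, where $L := \log s + \log w + \log n$. Granting this, the theorem follows at once: since $M_{C'} = M_f = M_C$, applying Theorem~\ref{thm:LB_size} to $C'$ gives $\log(rk(M_C)) = \log(rk(M_{C'})) \le e'd'(\log s' + \log w' + \log n) = O(ed\,L^{3})$.

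For the simulation, put $b := \frac{1}{2}\log w$, so that each $(\dis\circ\af)$-gate outputs one of at most $2^{b}$ values and a single weighted input to a gate has magnitude at most $2^{2b}=w$; after rescaling we may assume all weights, thresholds and gate outputs are integers. We carry the output value $v_g$ of each gate in its \emph{binary encoding}, i.e.\ as $O(b)$ Boolean signals $v_g^{(0)},v_g^{(1)},\dots$ with $v_g=\sum_i 2^i v_g^{(i)}$ (together with a sign bit). The point is that the potential $p_{g'}(\vec{x},\vec{y})=\sum_h w_{h,g'}v_h+\sum_i w^x_ix_i+\sum_i w^y_iy_i-t_{g'}$ of any gate $g'$ then becomes a \emph{linear form} in the incoming Boolean signals, with integer coefficients of magnitude $\mathrm{poly}(s,w,n)$; in particular $p_{g'}$ ranges over an integer interval of length $\mathrm{poly}(s,w,n)$. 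Hence the block of threshold gates simulating $g'$ only has to output the $O(b)$ bits of $\dis\circ\af(p_{g'})$ as a function of that linear form. Since, for the activation functions of interest (e.g.\ ReLU and sigmoid), $\dis\circ\af$ is a non-decreasing step function with at most $2^{b}$ pieces, this is exactly the evaluation of a linear decision tree of depth $O(b)$ whose $\le 2^{b}$ leaves are labelled by $O(b)$-bit values; by the result of~\cite{UDM:08} that a linear decision tree with $\ell$ leaves is simulable by a threshold circuit of size $O(\ell)$ and energy $O(\log\ell)$, the block has size $\mathrm{poly}(w)$, depth $O(b)$, and contributes only $O(b)=O(L)$ to the energy. (If $\dis\circ\af$ is not monotone one instead realizes the finite lookup table $p_{g'}\mapsto\dis\circ\af(p_{g'})$ by a selective neural set, Theorem~\ref{thm:neuralset}, at the cost of a larger but still polynomial block.) The top gate of $C$ is already a threshold gate, and being a linear form in the $v_h$'s it is again a linear form in the Boolean signals, so it is carried over essentially unchanged.

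Stacking these blocks level by level gives $d' = O(dL)$ and $\log s',\log w' = O(L)$, and leaves only the energy to be controlled; this is where the silent-range hypothesis is essential and is, I expect, the main obstacle. Because energy counts gates with non-zero output, the block simulating $g'$ must output all zeros whenever $g'$ is silent, i.e.\ whenever $p_{g'}$ lies in the silent interval $I$; a naive ``thermometer'' implementation of $\dis\circ\af$ would instead leave up to $2^{b}=\sqrt{w}$ gates active, far too many. To repair this, first compute with $O(1)$ threshold gates the \emph{activity bit} $a_{g'}=\true{p_{g'}\notin I}$, and then add to every gate of the block a term $-M\,a_{g'}$ with $M=\mathrm{poly}(s,w,n)$ large enough to swamp the rest of that gate's linear form: the block outputs all zeros when $a_{g'}=0$ and behaves as designed when $a_{g'}=1$. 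Thus, on any input, the blocks of the $\le e$ gates active in $C$ each contribute $O(b)$ to the energy while every other block contributes $0$, giving $e' = O(eb) = O(eL)$.

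The delicate points are: (i) verifying that the linear-decision-tree simulation of~\cite{UDM:08} can be ``gated'' by $a_{g'}$ without damaging its $O(\ell)$ size and $O(\log\ell)$ energy bounds, and realized in depth $O(b)$; (ii) bounding all coefficients and thresholds so that $\log w' = O(L)$; and (iii) the case of activation functions whose discretization is not a clean monotone step function outside the silent range, where one falls back on the selective-neural-set construction. With these in hand, substituting the parameters of $C'$ into Theorem~\ref{thm:LB_size} yields $\log(rk(M_C)) = O(ed(\log s+\log w+\log n)^{3})$, as claimed.
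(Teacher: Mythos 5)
Your overall plan---simulate each $(\dis\circ\af)$-gate by a small threshold subcircuit obtained from a binary-search/linear-decision-tree argument, then plug the resulting threshold circuit into Theorem~\ref{thm:LB_size}---is exactly the paper's strategy, and your parameter bookkeeping ($\log s',\log w'=O(L)$, $d'=O(dL)$, $e'=O(eL)$ and hence $\log rk(M_C)=O(edL^3)$) matches. But the paper avoids two of the complications that you flag as ``delicate.'' First, it does \emph{not} carry gate outputs in binary. For each gate $g$ it enumerates the set $P_g$ of potential values of $g$ lying in $[t_{\max},\infty)$ (and symmetrically for $(-\infty,t_{\min}]$), of cardinality $O((s+n)w)$, and builds a binary-search tree of threshold gates $g_{\vec{s}}$ over $P_g$; the leaf $g_{\vec{s}^*(\vec{a},\vec{b})}$ corresponding to the actual potential fires, and the intended output $\dis\circ\af(p)$ is encoded not as bits but as the weight $w_{g,g'}\cdot(\dis\circ\af(p))$ on the outgoing edge from that leaf to each successor $g'$ of $g$. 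This eliminates the binary decode/re-encode step, and it also sidesteps the monotone-vs.-nonmonotone case split, since one simply enumerates achievable potential values rather than pieces of the step function. Second, no separate activity bit is needed: because $\mathrm{mid}(P_g)\ge t_{\max}$, the root gate $g_\epsilon$ already outputs zero whenever $p(\vec{a},\vec{b})<t_{\max}$, and the construction is arranged so that then \emph{every} $g_{\vec{s}}$ is zero (Claim~\ref{clm:p<t_max}); silence of $g$ propagates automatically, and only the $\le e$ active gates of $C$ contribute $O(\log(s+n)+\log w)$ firings each.

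Two concrete issues with your version. (i) The gating term has the wrong sign: adding $-M\,a_{g'}$ to every gate suppresses the block precisely when $a_{g'}=1$, i.e.\ when $g'$ is \emph{active}. You want $-M(1-a_{g'})$ (equivalently, add $Ma_{g'}$ to each potential and $M$ to each threshold), so that the block is suppressed when $a_{g'}=0$. (ii) You leave open whether the linear-decision-tree-to-threshold-circuit conversion of~\cite{UDM:08} survives being gated by $a_{g'}$ with its $O(\log\ell)$ energy bound intact, and whether $a_{g'}$ (an OR of two one-sided thresholds when $I$ is a two-sided interval) can be computed and distributed without disturbing the depth and energy accounting. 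These are repairable, but the paper's construction is simply built so these questions never arise. Aside from those points, your argument reaches the same conclusion by a route that is the same in spirit but somewhat more roundabout in the details.
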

%In the inequality, the linear factors of $e,d$ and the polylogarithmic factors of $s, w$ are comparable. 

Our simulation is based on a binary search of the potentials of a discretized gate, we employ a conversion technique from a linear decision tree to a threshold circuit presented in~\cite{UDM:08}.

\section{Conclusions}

In this paper, we consider threshold circuits and other theoretical models of neural networks in terms of four complexity measures, size, depth energy and weight. We prove that if the communication matrix of a Boolean function $f$ has linear rank, any threshold circuit of sub-linear depth, sub-linear energy and sub-exponential weight needs exponential size to compute $f$.

We believe that circuit complexity arguments provide insights into neural computation. Besides the famous Marr's three-level approach towards understanding the brain computation (computational level, algorithmic level, and implementation level)~\cite{Mar82}, Valiant~\cite{V14} added an additional requirement that it has to incorporate some understanding of the quantitative constraints that are faced by cortex. We expect that circuit complexity arguments could shed light on a quantitative constraint through a complexity measure. Maass \emph{et al}.~\cite{MPVL19} pointed out a difficulty to uncover a neural algorithm employed by the brain, because its hardware could be extremely adopted to the task, and consequently the algorithm vanishes: even if we know to the last detail its precise structure, connectivity, and vast array of numerical parameters (an artificial neural network given by deep learning is the case), it is still hard to extract an algorithm implemented in the network. A lower bound does not provide a description of an explicit neural algorithm, but could give a hint for computational natures behind neural computation, because a lower bound argument necessarily deals with \emph{every} algorithm which a theoretical model of a neural network is able to implement. We expect that developing lower bound proof techniques for a theoretical model of neural networks can push this line of research.

\bibliography{../../uchizawa}

\newpage

\section{Appendix}

\subsection{Proof of Claim~\ref{clm:UB_RankofTP}}

	Let $z = |T|$, and $\tau_1, \tau_2, \dots , \tau_z$ be an arbitrary order of threshold functions in $T$. For each $k \in [z]$, we define
	\[	
	R_k =  \{ p^x_{\tau_k}(\vec{a}) \mid \vec{a} \in \{0, 1\}^n\}.
	\]
	Since a threshold function receives a value between $-w$ and $w$ from a single input, we have $|R_k| \le 2nw+1$.
	For $\vec{r} = (r_1, r_2 , \dots , r_z)\in R_1 \times R_2 \times \dots  \times R_z$,  we define $R(\vec{r}) = X(\vec{r}) \times Y(\vec{r})$ as a combinatorial rectangle where
	\[
	X(\vec{r}) = \{ \vec{x} \mid \forall k \in [z], p_{\tau_k}(\vec{x}) = r_k\}
	\]
	and
	\[
	Y(\vec{r}) = \{ \vec{y} \mid  \forall k \in [z], t_{\tau_k} \le r_k + p^y_{\tau_k}(\vec{y}) \}.
	\]
	Clearly, all the rectangles are disjoint, and hence $M[T]$ can be expressed as a sum of rank-1 matrices given by $R(\vec{r})$'s taken over all the $\vec{r}$'s. Thus Fact 1(i) implies that its rank is at most $|R_1 \times R_2 \times \dots  \times R_z| \le (2nw+1)^z$.

\subsection{Proof of Claim~\ref{clm:M_P=Prod_H}}
	Consider an arbitrary fixed assignment  $(\vec{a}, \vec{b}) \in \dom$. We show that
	\[
	H[Q_1](\vec{a}, \vec{b}) \circ H[Q_2](\vec{a}, \vec{b}) \circ \dots \circ H[Q_d](\vec{a}, \vec{b})= 0,
	\]
	if $M_{\vec{P}}(\vec{a}, \vec{b}) = 0$, and
	\[
	H[Q_1](\vec{a}, \vec{b}) \circ H[Q_2](\vec{a}, \vec{b}) \circ \dots \circ H[Q_d](\vec{a}, \vec{b})= 1,
	\]
	if $M_{\vec{P}}(\vec{a}, \vec{b}) = 1$. We write $\vec{P}^* = (P^*_1, P^*_2, \dots , P_d^*)$ to denote $\vec{P}^*(\vec{a}, \vec{b})$ for a simpler notation. 
	
	Suppose $M_{\vec{P}}(\vec{a}, \vec{b}) = 0$. In this case, we have $\vec{P} \neq \vec{P}^*$, and hence there exists a level $l \in [d]$ such that $P_l \neq P^*_l$ while $P_{l'} = P^*_{l'}$ for every $l' \in [l-1]$. For such $l$, it holds that
	\begin{eqnarray}\label{eq:tauP^*=tauP}
		\tau[g, \vec{P}^*](\vec{a}, \vec{b}) = \tau[g, \vec{P}](\vec{a}, \vec{b})
	\end{eqnarray}
	for every $g \in G_l$. We show that $H[Q_l](\vec{a}, \vec{b}) = 0$ by 
	considering two cases: $P_l\backslash P^*_l \neq \emptyset$ and  $P_l \subset P_l^*$.
	
	Consider the case where $P_l\backslash P^*_l \neq \emptyset$, then there exists $g \in P_l\backslash P^*_l$. Since $g \not\in P^*_l$, we have $\tau[g, \vec{P}^*](\vec{a}, \vec{b}) =0$. Thus, Eq.~(\ref{eq:tauP^*=tauP}) implies that $\tau[g, \vec{P}](\vec{a}, \vec{b}) = 0$, and hence $M[T](\vec{a}, \vec{b}) = 0$ for every $T$ such that $Q_l \subseteq T$. Therefore, for every $T \in \mathcal{T}(Q_l)$, we have $M[T](\vec{a}, \vec{b}) = 0$, and hence 
	\[
	H[Q_l](\vec{a}, \vec{b}) = \sum_{T \in \mathcal{T}(Q_l)} M[T](\vec{a}, \vec{b}) = 0.
	\]
	
	Consider the other case where $P_l \subset P_l^*$. Let $Q^*_l = \{ \tau[g, \vec{P}^*] \mid g \in P^*_l\}$. Equation~(\ref{eq:tauP^*=tauP}) implies that $M[T](\vec{a}, \vec{b}) = 1$ if $T$ satisfies $Q_l \subseteq T \subseteq Q^*_l$, and $M[T] \inp{a}{b} = 0$, otherwise. Thus, 
	\begin{eqnarray*}
		H[Q_l](\vec{a}, \vec{b})
		&=& \sum_{T \in \mathcal{T}(Q_l)} (-1)^{|T| - |Q_l|} M[T]\\
		&=& \sum_{Q_l \subseteq T \subseteq Q^*_l} (-1)^{|T| - |Q_l|}
	\end{eqnarray*}
	Therefore, by the binomial theorem,
	\begin{eqnarray*}
		H[Q_l](\vec{a}, \vec{b})
		&=& \sum_{k=0}^{|Q^*_l|-|Q_l|} {|Q^*_l|-|Q_l| \choose k} (-1)^k\\
		&=& (1 - 1)^{|Q^*_l|-|Q_l|}\\
		&=& 0.
	\end{eqnarray*}

	Suppose $M_{\vec{P}}(\vec{a}, \vec{b}) = 1$. In this case, we have $\vec{P} = \vec{P}^*$. Thus, for every $l \in [d]$, Eq.~(\ref{eq:tauP^*=tauP}) implies that $M[T](\vec{a}, \vec{b}) = 1$ if $T=Q_l$, and $M[T](\vec{a}, \vec{b}) = 0$, otherwise. Therefore,
	\begin{eqnarray}
		H[Q_l](\vec{a}, \vec{b})
		&=&  \sum_{T \in \mathcal{T}(Q_l)}  (-1)^{|T| - |Q_l|} M[T](\vec{a}, \vec{b})\\
		&=&   (-1)^{|Q_l| - |Q_l|}\\
		&=& 1.
	\end{eqnarray}
	Consequently
	\[
	H[Q_1](\vec{a}, \vec{b}) \circ H[Q_2](\vec{a}, \vec{b}) \circ \dots \circ H[Q_d](\vec{a}, \vec{b})= 1,
	\]
	as desired.

\subsection{Proof of Lemma~\ref{lem:UB}}

Let $z = (e-1)(d-1)$. For simplicity, we assume that $n$ is divisible by $z$. Let $f:\dom \to \{ 0, 1\}$ be $z$-piecewise with $f_1, f_2, \dots , f_z$. We first prove the proposition for the case where $f$ satisfies
\[
f(\vec{x}, \vec{y}) = \bigvee_{j=1}^z f_j(\vec{x}, \vec{y}).
\]

Let us relabel $f_l$ for $l \in [z]$ as $f_{k, l}$ for $k \in [e-1]$, and $l \in [d-1]$. We then denote by $B_{k, l} =\{ i\in [n] \mid \mbox{$x_i$ or $y_i$ is fed into $f_{k,l}$} \}$ for $k \in [e-1]$, and $l \in [d-1]$, each of which contains $n/z$ integers. Thus, we have
\[
f(\vec{x}, \vec{y}) = \bigvee_{k\in [e-1]} \bigvee_{l \in [d-1]} f_{k,l}(\vec{x}, \vec{y}).
\]
By the assumption, $f_{k,l}$ is computable by a selective neural set $S_{k, l}$ for every pair of $k\in [e-1]$ and $l \in [d-1]$.

We construct the desired threshold circuit $C$ by arranging and connecting the selective neural sets, where $C$ has a simple layered structure consisting of the selective neural sets. After we complete the construction of $C$, we show that $C$ computes $f$, and evaluate its size, depth, energy, and weight. 

We start from the bottom level. First, we put in the first level the gates in $S_{k,1}$ for every $k \in [e-1]$. Then, for each $l$, $2 \le l \le d-1$, we add at the $l$th level the gates $g \in S_{k,l}$ for every $k \in [e-1]$, and connect the outputs of all the gates at the lower level to every $g \in S_{k, l}$ with weight $-2nw'/z$. For $g \in S_{k,l}$ and $(\vec{a}, \vec{b}) \in \dom$, we denote by $g^C(\vec{a}, \vec{b})$ the output of the gate $g$ placed in $C$. Finally, we add $\gtop$ that computes a conjunction of all gates in the layers $1, 2 \dots , d-1$:
\[
\gtop(\vec{x}, \vec{y}) = \sign \left( \sum_{k \in [e-1]} \sum_{l \in [d-1]} \sum_{g \in S_{k, l}} g^C(\vec{x}, \vec{y}) - 1\right).
\]

We here show that $C$ computes $f$. By construction, the following claim is easy to verify:
\begin{claim}\label{clm:g^B_kl=F^B_kl}
	For any $g \in S_{k,l}$,
	\[
	g^C(\vec{x}, \vec{y}) = \left\{
	\begin{array}{ll}
		g\inp{x}{y} & \mbox{if every gate at the levels $1, \dots , l-1$ outputs zero};\\
		0 & \mbox{otherwise}.
	\end{array}
	\right.
	\]
\end{claim}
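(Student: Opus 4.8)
The plan is to verify the claim by a direct comparison between the inhibitory weight $-2nw'/z$ used in the construction and the largest potential that a gate of $S_{k,l}$ can accumulate from the input variables it actually reads; the level $l=1$ case is treated separately, where the hypothesis ``every gate at levels $1,\dots,l-1$ outputs zero'' holds vacuously. No induction on $l$ is needed for the claim itself: to evaluate $g^C$ on a fixed assignment it suffices that every gate placed strictly below level $l$ outputs a value in $\{0,1\}$, which is automatic since all of them are threshold gates.

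First I would record the numerical bound. A gate $g \in S_{k,l}$ reads only the variables indexed by $B_{k,l}$, and under the divisibility assumption $|B_{k,l}|=n/z$; since each weight of $g$ for an input variable has absolute value at most $w'$, we get $|p^x_g(\vec a)+p^y_g(\vec b)|\le 2|B_{k,l}|w' = 2nw'/z$ for every $\inp{a}{b}\in\dom$. I would also use that the threshold $t_g$ of every gate in a selective neural set may be taken to be a positive integer; this holds for the construction behind Theorem~\ref{thm:neuralset}, whose gates are exact pattern detectors whose potential never strictly exceeds their threshold.

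Then, fixing $\inp{a}{b}\in\dom$ and $g\in S_{k,l}$, the construction of $C$ gives
\[
	g^C(\vec a,\vec b) = \sign\!\left( p^x_g(\vec a) + p^y_g(\vec b) - \frac{2nw'}{z}\,N - t_g \right),
\]
where $N\ge 0$ is the number of gates below level $l$ that output one on $\inp{a}{b}$. If $N=0$ the inhibitory term disappears and $g^C(\vec a,\vec b)=\sign(p^x_g(\vec a)+p^y_g(\vec b)-t_g)=g(\vec a,\vec b)$, which is the first case. If $N\ge 1$ then, by the bound above together with $t_g\ge 1$,
\[
	p^x_g(\vec a)+p^y_g(\vec b) - \frac{2nw'}{z}\,N - t_g \le \frac{2nw'}{z} - \frac{2nw'}{z} - t_g = -t_g \le -1 < 0,
\]
so $g^C(\vec a,\vec b)=0$. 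For $l=1$ there are no lower gates, so $N=0$ always and $g^C=g$, matching the (vacuously satisfied) first case; this is the base case, and the verification is complete.

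I expect the only genuinely delicate point to be the normalization that makes the potential of $g^C$ \emph{strictly} negative when some lower gate fires: the inhibition $-2nw'/z$ exactly cancels the maximum positive potential $2nw'/z$, so without knowing $t_g\ge 1$ (equivalently, that $g$'s potential never strictly exceeds its threshold) the argument of $\sign$ could be $0$, which the definition of $\sign$ reads as output $1$. Everything else is a routine comparison of integers, which is presumably why the paper calls the claim ``easy to verify''.
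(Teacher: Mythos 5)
Your proof relies on the extra assumption that every gate in a selective neural set has threshold $t_g \ge 1$, and you justify it by appealing to the DNF-like construction of Theorem~\ref{thm:neuralset}. That is a genuine gap: Lemma~\ref{lem:UB}, of which this claim is an internal step, is stated for \emph{arbitrary} selective neural sets of size $\le s'$ and weight $\le w'$, not merely those produced by Theorem~\ref{thm:neuralset}. Nothing in the definition of a selective neural set bounds the thresholds from below, and one cannot normalize a threshold gate to have $t_g \ge 1$ without changing its Boolean behavior (the threshold plays the role of the only constant term available). As you yourself note, your chain of inequalities leaves the inhibited potential only at $\le -t_g$, so with $t_g = 0$ the $\sign$ function would still read the value $0$ as an output of $1$, and with a negative threshold the bound is even weaker. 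Your argument therefore does not establish the second case of the claim in the generality that the lemma requires.

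The paper closes this gap differently, and without any hypothesis on $t_g$. The key observation is that the \emph{range} of the potential $p^x_g + p^y_g$ over all inputs is at most $2nw'/z$, because $g$ reads only the $2|B_{k,l}| \le 2n/z$ variables indexed by $B_{k,l}$, each weight has magnitude at most $w'$, and each $0/1$ variable contributes a swing of at most $w'$. Consequently, if for some assignment $\inp{a}{b}$ one had $p^x_g(\vec a)+p^y_g(\vec b) - 2nw'/z \ge t_g$, then for \emph{every} other assignment $\inp{c}{d}$ one gets $p^x_g(\vec c)+p^y_g(\vec d) \ge p^x_g(\vec a)+p^y_g(\vec b) - 2nw'/z \ge t_g$, so $g$ would output one on all inputs. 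Selectivity of $S_{k,l}$ then forces $f_{k,l}$ to be the constant-one function, which the paper rules out. In short: instead of bounding the threshold from below, the paper shows that if the inhibition fails to kill $g^C$ on one input, it fails on all inputs, which contradicts non-constancy of $f_{k,l}$. If you want to rescue your line of reasoning, replace ``$t_g \ge 1$'' by the consequence of non-constancy: there exists an input where $p < t_g$, hence $\min p < t_g$, and combined with the range bound $\max p - \min p \le 2nw'/z$ this gives exactly $\max p - 2nw'/z < t_g$, which is the strict inequality you need.
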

\begin{proof}
	If every gate at the levels $1, \dots , l-1$ outputs zero, the output of $g^C$ is identical to the counterpart of $g$, and hence $g^C(\vec{x}, \vec{y}) = g(\vec{x}, \vec{y})$. Otherwise, there is a gate outputting one at the lower levels. Since $g^C$ receives an output from the gate at the lower level, the value $-2w'n/z$ is subtracted from the potential of $g^C$. Since $g$ receives at most $2n/z$ positive weights bounded by $w'$, the potential of $g^C$ is now below its threshold. Otherwise, $g$ outputs one for any input assignment, which implies that, since $S_{k, l}$ is selective, $f_{k, l}$ is the constant function. This contradicts the fact that $f$ is $z$-piecewise.
\end{proof}

Suppose $f(\vec{a}, \vec{b})=0$. In this case, for every $k \in [e-1]$, $l \in [d-1]$, and $g \subseteq S_{k, l}$, $g{k,l}(\vec{a}, \vec{b}) = 0$. Therefore, Claim~\ref{clm:g^B_kl=F^B_kl} implies that no gate in $C$ outputs one.

Suppose $f(\vec{a}, \vec{b})=1$. In this case, there exists $l^* \in [d-1]$ such that $f_{k, l^*}(\vec{a}, \vec{b}) = 1$ for some $k \in [e-1]$, while $f_{k, l}(\vec{a}, \vec{b}) = 0$ for every $1 \le k \le e-1$ and $1 \le l \le l^*-1$. Since $S_{k, l^*}$ compute $f_{k, l}$, Claim~\ref{clm:g^B_kl=F^B_kl} implies that there exists $g \in S_{k, l^*}$ such that $g^C(\vec{a}, \vec{b}) = 1$, which implies that $\gtop\inp{x}{y} = 1$.

Finally, we evaluate the size, depth, energy, and weight of $C$. Since $C$ contains at most $s'$ gates for each pair of $k \in [e-1]$ and $l\in [d-1]$ where $z=(e-1)(d-1)$, we have in total $s \le zs' + 1$. The additional one corresponds to the output gate. Because the gates $g \in S_{k, l}$ are placed at the $l$-th level for $l \in [d-1]$, the level of $\gtop$ is clearly $d$, and hence, $C$ has depth $d$. Claim~\ref{clm:g^B_kl=F^B_kl} implies that if there is a gate outputting one at level $l$, then no gate in higher levels outputs one. In addition, since $S_{k, l}$ is selective, at most one gate $g \in S_{k, l}$ outputs one. Therefore, at most $e-1$ gates at the $l$th level output one, followed by $\gtop$ gates outputting one. Thus, $C$ has an energy $e$. Any connection in $C$ has weight at most $w'$ or $2nw'/z$. Thus, the weight of $C$ is $2nw'/z$.

Consider the other case where
\[
f(\vec{x}, \vec{y}) = \overline{\bigvee_{j=1}^z f_j(\vec{x}, \vec{y})}.
\]
We can obtain the desired circuit $C$ by the same construction as above except that $\gtop$ computes the complement of a conjunction of all gates in the layers $1, 2 \dots , d-1$:
\[
\gtop(\vec{x}, \vec{y}) = \sign \left( \sum_{k \in [e-1]} \sum_{l \in [d-1]} \sum_{g \in S_{k, l}} -g^C(\vec{x}, \vec{y}) \right).
\]

\subsection{Proof of Theorem~\ref{thm:UB_conj}}
For simplicity, we consider the case where $n$ is a multiple of $z$. It suffices to show that $\conj_n$ is $z$-piecewise, and computable by a neural set of size $s' = 2^{n/z}$ and weight $w' = n/z$.

We can verify that $\conj_n$ is $z$-piecewise, since
\[
\conj_n\inp{x}{y} = \bigvee_{j=1}^z f_j (\vec{x}, \vec{y})
\]
where
\[
f_j (\vec{x}, \vec{y}) = \bigvee_{i\in B_j} x_i \wedge y_i.
\]
and $B_j = \{ i \in [n] \mid (j-1)\lceil n/z \rceil +1 \le i \le  jn/z \}$.

Consider an arbitrary fixed $j \in [z]$. Below we show that $f_j$ is computable by a neural set of  size
\[
s' \le 2^{n/z}
\]
and weight $w' \le n/z$.

Recall that we denote by $\true{\mathrm{P}}$ for a statement $\mathrm{P}$  a function that outputs one if $\mathrm{P}$ is true, and zero otherwise. Then $f_j$ can be expressed as
\begin{eqnarray}\label{eq:f_j=F_j^B}
	f_j (\vec{x}, \vec{y}) = \bigvee_{\emptyset \subset B \subseteq B_j} F^B_j(\vec{x}, \vec{y})
\end{eqnarray}
where
\[
F^B_j(\vec{x}, \vec{y}) = \bigwedge_{i \in B} \true{x_i = 1} \wedge \bigwedge_{i \not\in B} \true{x_i = 0}  \wedge \bigvee_{i \in B} y_i.
\]
For any assignment $\vec{a} \in \{ 0, 1\}^n$, we define $B^*(\vec{a})$ as
\[
B^*(\vec{a})= \{ i \in B_j \mid a_i = 1 \}.
\]
Then, for every $\inp{a}{b} \in \dom$,
\begin{eqnarray}\label{eq:F^B_j=1_0}
	F^B_j(\vec{a}, \vec{b}) = \left\{
	\begin{array}{ll}
		1 & \mbox{ if } B = B^*(\vec{a}) \mbox{ and } \exists i \in B, b_i = 1; \\
		0 & \mbox{ otherwise}.
	\end{array}
	\right.
\end{eqnarray}
The function $F^B_j$ is computable by a threshold gate.
\begin{claim}\label{clm:g_computes_F^B}
	For any $B \subseteq B_j$, $F^B_j$ can be computed by a threshold gate $g^B_j$ with weights $w^x_i$, $w^y_i$; the threshold $t$ is defined as follows: for every $i \in [n]$,
	\[
	w^x_i = \left\{
	\begin{array}{ll}
		0 & \mbox{ if } i \not\in B_j;\\
		|B| & \mbox{ if } i \in B \subseteq B_j;\\
		-|B| & \mbox{ if } i \in B_j \backslash B,
	\end{array}
	\right.
	\]
	and
	\[
	w^y_i = \left\{
	\begin{array}{ll}
		0 & \mbox{ if } i \not\in B_{k, l};\\
		1 & \mbox{ if } i \in B \subseteq B_{k, l};\\
		0 & \mbox{ if } i \in B_{k, l} \backslash B,
	\end{array}
	\right.
	\]
	and $t = |B|^2 + 1$.
\end{claim}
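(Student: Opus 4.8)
To establish Claim~\ref{clm:g_computes_F^B}, the plan is a direct verification against the characterization of $F^B_j$ in Eq.~(\ref{eq:F^B_j=1_0}). Fix an arbitrary assignment $(\vec{a}, \vec{b}) \in \dom$. Since $w^x_i = w^y_i = 0$ for every $i \notin B_j$, only the coordinates in $B_j$ contribute, and the potential of $g^B_j$ on $(\vec{a},\vec{b})$ is
\[
p^x_{g^B_j}(\vec{a}) + p^y_{g^B_j}(\vec{b}) = |B|\sum_{i \in B} a_i \;-\; |B|\sum_{i \in B_j \setminus B} a_i \;+\; \sum_{i \in B} b_i ,
\]
so that $g^B_j(\vec{a},\vec{b}) = 1$ iff this quantity is at least $t = |B|^2 + 1$. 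The guiding idea is a gap argument: the $x$-weights have magnitude $|B|$ whereas $\sum_{i \in B} b_i$ always lies in $[0, |B|]$, so any single coordinate of $\vec{a}$ inside $B_j$ that disagrees with the indicator of $B$ drops the $x$-part by at least $|B|$ — more than the $y$-part can ever compensate — while the extra $+1$ in the threshold is precisely what forces an active $y_i$ in the remaining case.

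Concretely, I would split into two cases. If $B = B^*(\vec{a})$, i.e.\ $a_i = 1$ for all $i \in B$ and $a_i = 0$ for all $i \in B_j \setminus B$, then the $x$-part equals exactly $|B|^2$, so the potential is $|B|^2 + |\{i \in B : b_i = 1\}|$, which is $\ge t$ exactly when some $i \in B$ has $b_i = 1$; by Eq.~(\ref{eq:F^B_j=1_0}) this is precisely when $F^B_j(\vec{a},\vec{b}) = 1$. If instead $B \neq B^*(\vec{a})$, then either some $i \in B$ has $a_i = 0$, forcing $\sum_{i \in B} a_i \le |B| - 1$, or some $i \in B_j \setminus B$ has $a_i = 1$, forcing $-|B|\sum_{i \in B_j \setminus B} a_i \le -|B|$; in either subcase the $x$-part is at most $|B|^2 - |B|$, hence the potential is at most $|B|^2 - |B| + |B| = |B|^2 < t$, giving $g^B_j(\vec{a},\vec{b}) = 0 = F^B_j(\vec{a},\vec{b})$ by Eq.~(\ref{eq:F^B_j=1_0}) again. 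Since $(\vec{a},\vec{b})$ was arbitrary, the claim follows. (With the convention that empty sums are zero, the degenerate case $B = \emptyset$ is subsumed by the same bounds.)

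I do not expect any real obstacle: the argument is a short case analysis on the potential. The only point demanding care is the bookkeeping of the gap — checking that the weight magnitude $|B|$ on the $x$-coordinates (rather than anything smaller) is exactly what makes the ``wrong-bit'' penalty strictly dominate the at-most-$|B|$ contribution of the $y$-coordinates, and that setting the threshold to $|B|^2 + 1$ rather than $|B|^2$ is precisely what encodes the $\bigvee_{i \in B} y_i$ conjunct of $F^B_j$.
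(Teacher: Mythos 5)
Your proposal is correct and follows essentially the same route as the paper's proof: a direct case analysis on the potential, using the fact that the $x$-weights of magnitude $|B|$ make any disagreement with the indicator of $B$ cost at least $|B|$, which the $y$-contribution (at most $|B|$) cannot offset, while the threshold $|B|^2+1$ enforces the $\bigvee_{i\in B} y_i$ conjunct. The only difference is cosmetic — you split on whether $B = B^*(\vec{a})$ while the paper splits on the value of $F^B_j$ — so no further comparison is needed.
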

\begin{proof}
	Suppose $F^B_j\inp{x}{y} = 1$, that is, $B = B^*(\vec{x})$, and there exists $i^* \in B$ such that $y_{i^*} = 1$. Then, we have
	\[
	\sum_{i \in B_j} w^x_i x_i + \sum_{i \in B} w^y_i y_i \ge |B|\cdot |B| + 1 = t;
	\]
	thus, $g^B_j$ outputs one.
	
	Suppose $F^B_j\inp{x}{y} = 0$. There are two cases: $B \neq B^*(\vec{x})$ and $y_i = 0$ for every $i \in B_j$
	
	Consider the first case. If there exists $i^* \in B^*(\vec{x}) \backslash B$, then
	\[
	\sum_{i \in B_j} w^x_i x_i \le \sum_{i \in B} w^x_i x_i + w^x_{i^*} \le |B|\cdot |B| - |B|.
	\]
	Thus,
	\[
	\sum_{i \in B_j} w^x_i x_i + \sum_{i \in B} w^y_i y_i \le (|B|\cdot |B| - |B|) +|B| \le |B|^2 < t,
	\]
	and consequently, $g^B_{k, l}$ outputs zero. If $B^*(\vec{x})  \subset B$, then
	\[
	\sum_{i \in B_j} w^x_i x_i \le (|B|-1)\cdot |B|.
	\]
	Thus,	
	\[
	\sum_{i \in B_j} w^x_i x_i + \sum_{i \in B} w^y_i y_i \le ((|B|-1)\cdot |B|) + |B| \le |B|^2 < t,
	\]
	and hence, $g^B_j$ outputs zero.
	
	In the second case, we have
	\[
	\sum_{i \in B} w^y_i y_i = 0.
	\]
	Thus,
	\[
	\sum_{i \in B_j} w^x_i x_i + \sum_{i \in B} w^y_i y_i \le |B|\cdot |B| < t,
	\]
	and hence, $g^B_j$ outputs zero.	
\end{proof}

For any $\vec{a} \in \{ 0, 1\}^n$, Eq.~(\ref{eq:F^B_j=1_0}) implies that only $g^{B^*(\vec{a})}_j$ are allowed to output one. Thus, by Eq.~(\ref{eq:f_j=F_j^B}), a selective neural set
\[
S_j =\{g^B_j \mid \emptyset \subset B  \subseteq B_j\} 
\]
computes $f_j$. Since $|B_j| \le n/z$, we have $|S_j| \le 2^{n/z}$. Claim~\ref{clm:g_computes_F^B} implies that $w' \le n/z$.

\subsection{Proof of Theorem~\ref{thm:UB_eq}}
Let $z =(e-1)(d-1)$. $\eq_n$ is $z$-piecewise, since
\[
\eq_n\inp{x}{y} = \overline{\bigvee_{j=1}^z f_j (\vec{x}, \vec{y})}
\]
where
\[
f_j (\vec{x}, \vec{y}) = \bigvee_{i\in B_j} x_i \oplus y_i
\]
and $B_j = \{ i \in [n] \mid (j-1)\lceil n/z \rceil +1 \le i \le  jn/z \}$. Then the theorem implies that $f_j$ is computable by a selective neural set of size $s'=2^{2n/z}$ and $w' = 1$.

\subsection{Proof of Theorem~\ref{thm:LB_discretizer}}
Let $\dis$ be a discretizer and $\af$ be an activation function such that $\dis \circ \af$ has a silent range for $I$. In the proof, we only consider an open interval $I = (t_{\min}, t_{\max})$ because the proofs for the other cases are similar. Let $C$ be a $(\dis \circ \af)$-circuit of size $s$, depth $d$, energy $e$, and weight $w$. We obtain the desired threshold circuit $C'$ by showing a procedure by which any $(\dis \circ \af)$-gate $g$ in $C$ can be safely replaced by a set of threshold gates.

Let $g$ be an arbitrary $(\dis \circ \af)$-gate in $C$ that computes $g(\vec{x}, \vec{y}) = \dis \circ \af \left( p\inp{x}{y}. \right)$ We first consider $[t_{\max}, \infty)$. Let $P_g$ be a set of potential values for which $g$ outputs a non-zero value:
\[
P_g = \{ p\inp{a}{b} \mid \inp{a}{b}\in \dom, t_{\max} \le p\inp{a}{b} \}.
\]
Since the activation function and weights are discretized, we have $|P_g| = O((s+n)w)$.	

We operate the binary search over $P_g$, and construct a threshold gate that outputs one for every input $\inp{a}{b}$ such that $p\inp{a}{b}$ takes a particular value in $P$. For any $Q \subseteq P_g$, we define $\mathrm{mid}(Q)$ as the median of the integers in $Q$, and $Q^+$ (resp., $Q^-$) as the upper (resp., lower) half of $Q$:
\[
Q^+ = \{ p \in Q \mid p \le \mathrm{mid}(Q)\} \quad \mbox{ and } \quad 	Q^- = \{ p \in Q \mid \mathrm{mid}(Q) < p\}.
\]
If $Q$ contains an even number of values, we take the greater value of the two median values .

Let $\vec{s}$ be a binary string. We inductively construct a threshold gate $g_{\vec{s}}$ on the length of $\vec{s}$. For two strings $\vec{t}$ and $\vec{s}$, we write $\vec{t} \prec \vec{s}$ if $\vec{t}$ is a proper prefix of $\vec{s}$. We denote  a string $\vec{t}$ followed by 0 (resp., by 1) by $\vec{t}0$ (resp., $\vec{t}1$).

As the base of our construction, we consider the empty string $\epsilon$. Let $P_\epsilon = P_g$. We constructs a threshold gate $g_\epsilon$ that computes
\[
g_\epsilon(\vec{x}, \vec{y}) = \sign \left( p \inp{x}{y} - t_g[\epsilon] \right),
\]
where $t_g[\epsilon] = \mathrm{mid}(P_\epsilon)$.

Suppose we have constructed gates $g_{\vec{t}}$ for every $\vec{t}$ satisfying $|\vec{t}| \le k-1$. Consider a string $\vec{s}$ of length $k$. By the induction hypothesis, we have gates $g_{\vec{t}}$ for every $\vec{t}$ and $\vec{t} \prec \vec{s}$. Let $\vec{s}'$ be a string obtained by dropping the last symbol of $\vec{s}$. Then, we define
\[
P_\vec{s} = \left\{
\begin{array}{ll}
	P_{\vec{s}'}^+ & \mbox{if the last symbol of $\vec{s}$ is 1};\\
	P_{\vec{s}'}^- & \mbox{if the last symbol of $\vec{s}$ is 0}.
\end{array}
\right.
\]
Let $W = 3(s +n)w$. We  construct a threshold gate $g_{\vec{s}}$ as follows:
\[
g_{\vec{s}}(\vec{x}, \vec{y}) = 
\sign \left( p \inp{x}{y} + \sum_{\vec{t} \prec \vec{s}} w_{\vec{t}, \vec{s}} \cdot  g_{\vec{t}}(\vec{x}, \vec{y}) - t_g[{\vec{s}}] \right),
\]
where $w_{\vec{t}, \vec{s}}$ is the weight of the output of $g_{\vec{t}}$ and is defined as
\[
w_{\vec{t}, \vec{s}} = \left\{
\begin{array}{ll}
	W & \mbox{if $\vec{t}1$ is a prefix of $\vec{s}$};\\
	-W & \mbox{if $\vec{t}0$ is a prefix of $\vec{s}$};\\
	0 & \mbox{otherwise},
\end{array}
\right.
\]
and $t_g[\vec{s}] = \mathrm{mid}(P_\vec{s})  - W\cdot N_1(\vec{s})$, where $N_1(\vec{s})$ is the number of ones in $\vec{s}$.

We repeatedly apply the above procedure until $|P_\vec{s}| = 1$. Since we apply the binary search over $P$, we have $O(|P|) = O((s+n)w)$ gates and the length of $|\vec{s}|$ as $O(\log (s+n) + \log w)$ for any gate $g_{\vec{s}}$. 

Consider the strings $\vec{s}$ for which we have constructed $g_{\vec{s}}$. Let \[
S_g = \{ \vec{s} \mid 2 \le |P_\vec{s}| \} \quad \mbox{ and } \quad L_g = \{ \vec{s} \mid |P_\vec{s}| = 1 \}.
\]
For each $\inp{a}{b} \in \dom$, we denote the unique string that satisfies $P_{\vec{s}^*\inp{a}{b}} = \{ p\inp{a}{b }\}$ by $\vec{s}^* \inp{a}{b} \in L_g$. The following claims show that the $g_\vec{s}$s are useful for simulating $g$. 		
\begin{claim}\label{clm:t_max<p}
	Let $\inp{a}{b} \in \dom$ be an input assignment that satisfies $t_{\max} \le p\inp{a}{b}$.
	\begin{description}
		\item[(i)] For $\vec{s} \in S_g$, $g_\vec{s}$ outputs one if and only if $\vec{s}1$ is a prefix of $\vec{s}^* \inp{a}{b}$.
		\item[(ii)] For $\vec{s} \in L_g$, $g_{\vec{s}}$ outputs one if and only if $\vec{s} = \vec{s}^*\inp{a}{b}$.
	\end{description}  
\end{claim}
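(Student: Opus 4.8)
The plan is to prove (i) and (ii) together by induction on the length $|\vec{s}|$, following the root-to-leaf branch $\vec{s}^*\inp{a}{b}$ that the binary search takes on the fixed assignment $\inp{a}{b}$. Write $p = p\inp{a}{b}$; the hypothesis $t_{\max}\le p$ puts $p$ into $P_g = P_\epsilon$, so $\vec{s}^*\inp{a}{b}$ is well defined, and $p\in P_{\vec{t}}$ holds for every prefix $\vec{t}\preceq \vec{s}^*\inp{a}{b}$. The base case $\vec{s}=\epsilon$ is immediate: $g_\epsilon$ is by construction the plain threshold comparing $p$ against $\mathrm{mid}(P_\epsilon)$, so $g_\epsilon\inp{a}{b}=1$ exactly when $p$ lands in the half of $P_\epsilon$ that defines the child $\epsilon 1$, which by the definition of $P_{\epsilon 1}$ is precisely the condition that $\epsilon 1$ be a prefix of $\vec{s}^*\inp{a}{b}$.

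For the inductive step I fix $\vec{s}$ and split on whether $\vec{s}$ is a prefix of $\vec{s}^*\inp{a}{b}$. In the on-path case, every proper prefix $\vec{t}\prec\vec{s}$ is an internal node (it has the child pointing toward $\vec{s}$, so $|P_{\vec{t}}|\ge 2$), the symbol of $\vec{s}$ right after $\vec{t}$ coincides with that of $\vec{s}^*\inp{a}{b}$, and hence the induction hypothesis pins $g_{\vec{t}}\inp{a}{b}$ to exactly the value that the killer weight $w_{\vec{t},\vec{s}}$ anticipates. Therefore $\sum_{\vec{t}\prec\vec{s}}w_{\vec{t},\vec{s}}\,g_{\vec{t}}\inp{a}{b}$ equals the quantity that the offset term in $t_g[\vec{s}]$ is calibrated to cancel, and the defining inequality of $g_{\vec{s}}$ collapses to the comparison of $p$ against $\mathrm{mid}(P_{\vec{s}})$; as in the base case this gives $g_{\vec{s}}\inp{a}{b}=1$ iff $p$ selects the $\vec{s}1$-child, i.e.\ iff $\vec{s}1\preceq \vec{s}^*\inp{a}{b}$. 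For part (ii), a leaf $\vec{s}$ has $|P_{\vec{s}}|=1$, so ``$\vec{s}$ on the path'' already forces $\vec{s}=\vec{s}^*\inp{a}{b}$, and then $p\in P_{\vec{s}}$ makes the comparison true.

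In the off-path case, let $\vec{t}_0$ be the longest common prefix of $\vec{s}$ and $\vec{s}^*\inp{a}{b}$, so $|\vec{t}_0|<|\vec{s}|$ and $\vec{t}_0$ is internal. By the induction hypothesis $g_{\vec{t}_0}\inp{a}{b}$ takes the value \emph{opposite} to the one that $w_{\vec{t}_0,\vec{s}}$ expects, because the symbol of $\vec{s}$ after $\vec{t}_0$ differs from that of $\vec{s}^*\inp{a}{b}$; moreover every longer prefix of $\vec{s}$ is itself off the path and hence evaluates to $0$. Tracking these contributions shows that the killer-weight sum deviates from its on-path value by at least $W$ in the direction that drives the argument of $g_{\vec{s}}$ strictly below its threshold, and since $W=3(s+n)w$ exceeds $|p|$ together with any median potential value, $g_{\vec{s}}\inp{a}{b}=0$. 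This matches the claim, since $\vec{s}$ off the path means neither $\vec{s}1$ nor $\vec{s}$ itself can be a prefix of $\vec{s}^*\inp{a}{b}$.

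The step I expect to be the main obstacle is the bookkeeping in the off-path case: one must check that the deviation of the killer-weight sum cannot be shrunk below $W$ by several simultaneously ``wrong'' prefix gates. The key observation is that every prefix of $\vec{s}$ strictly longer than $\vec{t}_0$ is off the path and so evaluates to $0$, contributing nothing, whence the entire net deviation comes from $g_{\vec{t}_0}$ alone (and the $\pm W$ term it fails to supply), with a consistent sign. A secondary point is to fix the constant in $W$ so that it dominates $|p\inp{a}{b}|$ and the relevant $\mathrm{mid}(P_{\vec{s}})$ simultaneously; this is exactly where the discretization bound — at most one bounded word per input over $O(s+n)$ inputs — enters, and the symmetric range $(-\infty,t_{\min}]$ is handled in the same way.
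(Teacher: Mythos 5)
Your proposal is correct and follows essentially the same route as the paper's proof: induction on $|\vec{s}|$, with the base case reducing to the comparison of $p\inp{a}{b}$ against $\mathrm{mid}(P_\epsilon)$, the on-path case using the cancellation of the $\pm W$ killer weights against the $W\cdot N_1(\vec{s})$ offset in $t_g[\vec{s}]$ so that the gate again just compares $p\inp{a}{b}$ with $\mathrm{mid}(P_{\vec{s}})$, and the off-path case locating the divergence point (your $\vec{t}_0$, the paper's $\vec{t}'$) whose gate deviates by $W$, which dominates $|p\inp{a}{b}-\mathrm{mid}(P_{\vec{s}})|$ by the choice $W=3(s+n)w$. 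Your observation that all prefixes strictly between $\vec{t}_0$ and $\vec{s}$ are off the path and hence contribute zero is exactly the bookkeeping the paper's case analysis performs implicitly, and your treatment of part (ii) via $\mathrm{mid}(P_{\vec{s}^*})=p\inp{a}{b}$ matches the paper's.
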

\begin{proof}
	Consider an arbitrary input assignment  $\inp{a}{b} \in \dom$ that satisfies $t_{\max}  \le p\inp{a}{b}$. For notational simplicity, we write $\vec{s}^*$ for $\vec{s}^* \inp{a}{b}$.
	
	\smallskip
	\noindent
	Proof of (i). We verify the claim by induction on the length of $\vec{s}$. For the base case, we consider $\vec{\epsilon}$. It suffices to show that $g_\epsilon \inp{a}{b} = 1$ if the first symbol of $\vec{s}^*$ is 1; otherwise, $g_\epsilon \inp{a}{b} = 0$. If the first symbol is 1, we have $p \inp{a}{b} \in P^+_\epsilon$, which implies that $t_g[\epsilon] \le p \inp{a}{b}$. Thus, $g_\epsilon \inp{a}{b} = 1$. Similarly, if the first symbol is zero, we have $p \inp{a}{b} \in P^-_\epsilon$, implying that $p \inp{a}{b} < t_g[\epsilon] $. Thus, $g_\epsilon \inp{a}{b} = 0$.
	
	We assume that for the induction hypothesis, $g_\vec{t}$ outputs one if and only if $\vec{t}1$ is a prefix of $\vec{s}^* \inp{a}{b}$ for every $\vec{t}$ of length $k-1$, at most, for a positive integer $k$. Next, we consider a string $\vec{s}$ of length $k$.
	
	We first verify that $g_\vec{s}$ outputs zero if $\vec{s}$ itself is not a prefix of $\vec{s}^* \inp{a}{b}$.
	If $\vec{s}$ is not a prefix of $\vec{s}^*$, there exists a prefix $\vec{t}'$ of $\vec{s}$ such that $\vec{t}'0$ (resp., $\vec{t}'1$) is a prefix of $\vec{s}$, whereas $\vec{t}'1$ (resp., $\vec{t}'0$) is not a prefix of $\vec{s}^*$.
	
	Consider the case where $\vec{t}'0$ is a prefix of $\vec{s}^*$ (i.e., $\vec{t}'1$ is a prefix of $\vec{s}$). In this case, the induction hypothesis implies that: 
	$g_{\vec{t}'} \inp{a}{b} = 0$. In addition, since $\vec{t}'1$ is a prefix of $\vec{s}$, we have $w_{\vec{t}', \vec{s}} = W$. Thus, the potential of $g_{\vec{s}}$ for $\inp{a}{b}$ is at most
	\begin{eqnarray*}
		&& p \inp{a}{b} + \sum_{\vec{t} \prec \vec{s}} w_{\vec{t}, \vec{s}} \cdot  g_{\vec{t}}(\vec{a}, \vec{b}) - t_g[{\vec{s}}]\\
		&\le& p \inp{a}{b} +W \cdot (N_1(\vec{s})-1) - \mathrm{mid}(P_{\vec{s}}) - W\cdot N_1(\vec{s})\\
		&\le& p \inp{a}{b} - \mathrm{mid}(P_{\vec{s}}) - W, 
	\end{eqnarray*} 
	which is less than zero because $p \inp{a}{b} \le (s+n)w$.
	
	Consider the case in which $\vec{t}'1$ is a prefix of $\vec{s}^*$ (i.e., $\vec{t}'0$ is a prefix of $\vec{s}$). In this case, the induction hypothesis implies that $g_{\vec{t}'} \inp{a}{b} = 1$. In addition, since $\vec{t}'0$ is a prefix of $\vec{s}$, $w_{\vec{t}', \vec{s}} = -W$. Thus, the potential of $g_{\vec{s}}$ for $\inp{a}{b}$ is at most
	\begin{eqnarray*}
		&&p \inp{a}{b} + \sum_{\vec{t} \prec \vec{s}} w_{\vec{t}, \vec{s}} \cdot  g_{\vec{t}}(\vec{a}, \vec{b}) - t_g[{\vec{s}}]\\
		&\le& p \inp{a}{b}  + (W \cdot N_1(\vec{s}) - W) - \mathrm{mid}(P_{\vec{s}}) - W\cdot N_1(\vec{s})\\
		&\le& p \inp{a}{b} - \mathrm{mid}(P_{\vec{s}}) - W, 
	\end{eqnarray*} 
	which is again less than zero because $p \inp{a}{b} \le 2(s+n)w$.
	
	Suppose $\vec{s}$ is a prefix of $\vec{s}^*$. The induction hypothesis implies that the potential of $g_\vec{s}$ is equal to
	\begin{eqnarray*}
		&&p \inp{a}{b} + \sum_{\vec{t} \prec \vec{s}} w_{\vec{t}, \vec{s}} \cdot  g_{\vec{t}}(\vec{a}, \vec{b}) - t_g[{\vec{s}}]\\
		&=& p \inp{a}{b}  + W \cdot N_1(\vec{s})  - \mathrm{mid}(P_{\vec{s}}) - W\cdot N_1(\vec{s})\\
		&=& p \inp{a}{b} - \mathrm{mid}(P_{\vec{s}}).
	\end{eqnarray*} 
	Thus, similar to the base case, we can show that $g_\vec{s} \inp{a}{b} = 1$ if $\vec{s}1$ is a prefix of $\vec{s}^*$, and $g_\vec{s}(\inp{a}{b}) = 0$ otherwise. More formally, if $\vec{s}1$ is a prefix of $\vec{s}^*$, we have $p \inp{a}{b} \in P^+_\vec{s}$, which implies that $\mathrm{mid}(P_\vec{s}) \le p \inp{a}{b}$. Thus, $g_\vec{s} \inp{a}{b} = 1$. If $\vec{s}0$ is a prefix of $\vec{s}^*$, we have $p \inp{a}{b} \in P^-_\vec{s}$, which implies that $p \inp{a}{b} <\mathrm{mid}(P_\vec{s}) $. Thus, $g_\vec{s} \inp{a}{b} = 0$.
	
	\smallskip
	\noindent
	Proof of (ii). Consider $\vec{s} \in L_g$. Similar to claim (i), we can verify that $g_\vec{s}\inp{a}{b} = 0$ if $\vec{s} \neq \vec{s}^*$. If $\vec{s} = \vec{s}^*$, claim (i) also implies that the potential of $g_{\vec{s}^*}$ is equal to
	\begin{eqnarray*}
		&&p \inp{a}{b} + \sum_{\vec{t} \prec \vec{s}^*} w_{\vec{t}, \vec{s}^*} \cdot  g_{\vec{t}}(\vec{a}, \vec{b}) - t_g[{\vec{s}^*}]\\
		&=& p \inp{a}{b}  + W \cdot N_1(\vec{s})  - \mathrm{mid}(P_{\vec{s}}) - W\cdot N_1(\vec{s})\\
		&=& p \inp{a}{b} - \mathrm{mid}(P_{\vec{s}^*}).
	\end{eqnarray*} 
	Since $P_{\vec{s}^*} = \{p \inp{a}{b} \}$, $\mathrm{mid}(P_{\vec{s}^*}) = p\inp{a}{b}$. Thus, we have
	\begin{eqnarray*}
		p \inp{a}{b} - \mathrm{mid}(P_{\vec{s}^*}) = 0,
	\end{eqnarray*}
	which implies that $g_{\vec{s}^*}\inp{a}{b} = 1$.
\end{proof}

\begin{claim}\label{clm:p<t_max}
	For any $\inp{a}{b} \in \dom$ that satisfies $p \inp{a}{b} < t_{\max}$, every gate $g_\vec{s}$ outputs zero.
\end{claim}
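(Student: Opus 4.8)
The plan is to prove the claim by induction on the length of $\vec{s}$, paralleling the proof of Claim~\ref{clm:t_max<p} but now arguing that every one of the simulating gates $g_{\vec{s}}$ stays \emph{silent}. Fix an input $\inp{a}{b}\in\dom$ with $p\inp{a}{b}<t_{\max}$; if $P_g=\emptyset$ there is nothing to prove, so assume $P_g\neq\emptyset$. For the base case $\vec{s}=\epsilon$ the gate $g_\epsilon$ computes $\sign\left(p\inp{a}{b}-t_g[\epsilon]\right)$ with $t_g[\epsilon]=\mathrm{mid}(P_g)$; since every element of $P_g$ is by definition at least $t_{\max}$, its median satisfies $t_g[\epsilon]\ge t_{\max}>p\inp{a}{b}$, so the potential of $g_\epsilon$ on $\inp{a}{b}$ is strictly negative and $g_\epsilon\inp{a}{b}=0$.

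For the inductive step, fix a string $\vec{s}$ and assume that every $g_{\vec{t}}$ with $\vec{t}\prec\vec{s}$ outputs zero on $\inp{a}{b}$; this is legitimate because each such $\vec{t}$ is strictly shorter than $\vec{s}$, and the only simulating gates whose outputs enter the definition of $g_{\vec{s}}$ are exactly those $g_{\vec{t}}$ with $\vec{t}\prec\vec{s}$. Since all of them contribute $0$, the term $\sum_{\vec{t}\prec\vec{s}}w_{\vec{t},\vec{s}}\,g_{\vec{t}}\inp{a}{b}$ vanishes and the potential of $g_{\vec{s}}$ on $\inp{a}{b}$ collapses to $p\inp{a}{b}-t_g[\vec{s}]$. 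It therefore suffices to show $t_g[\vec{s}]>p\inp{a}{b}$. This holds because $P_{\vec{s}}\subseteq P_g$ forces $\mathrm{mid}(P_{\vec{s}})\ge t_{\max}$, and the $W\cdot N_1(\vec{s})$ offset that $t_g[\vec{s}]$ applies to $\mathrm{mid}(P_{\vec{s}})$ is nonnegative (it is the same offset that cancels the $W$-weighted edges in the prefix case of Claim~\ref{clm:t_max<p}); hence $t_g[\vec{s}]\ge\mathrm{mid}(P_{\vec{s}})\ge t_{\max}>p\inp{a}{b}$, giving $g_{\vec{s}}\inp{a}{b}=0$ and completing the induction.

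I do not expect a genuine obstacle here; the work is already done by the construction. The decisive feature is that the binary search ranges only over $P_g$, so every threshold $\mathrm{mid}(P_{\vec{s}})$ appearing among the simulating gates sits at or above $t_{\max}$, and once the inductive hypothesis certifies that the lower gates are silent the $\pm W$ edges among them play no role. The only things to keep track of are that the induction respects the prefix order (it does, since $g_{\vec{s}}$ depends only on strictly shorter prefixes) and that, inside the overall simulation, $p\inp{a}{b}$ must be read as the potential of $g$ evaluated on the already correctly simulated outputs of the gates below $g$ — but establishing that is part of the surrounding proof of Theorem~\ref{thm:LB_discretizer}, not of this claim.
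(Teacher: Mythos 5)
Your proof is correct and rests on the same key observation as the paper: every $P_{\vec{s}}$ is a subset of $P_g$, so $\mathrm{mid}(P_{\vec{s}}) \ge t_{\max} > p\inp{a}{b}$, and once the induction hypothesis makes all the $g_{\vec{t}}$ with $\vec{t}\prec\vec{s}$ silent, the potential of $g_{\vec{s}}$ reduces to $p\inp{a}{b}-t_g[\vec{s}]<0$. You phrase this as one uniform induction, whereas the paper's proof is a terser, slightly garbled case split (the base case $\epsilon$; strings containing a $1$, deferred to ``a similar argument to Claim~\ref{clm:t_max<p}''; and all-zero strings handled directly via $\mathrm{mid}(Q)\ge t_{\max}$), but the mechanism is identical. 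One caveat worth flagging: your step $t_g[\vec{s}]\ge\mathrm{mid}(P_{\vec{s}})$ requires $t_g[\vec{s}]=\mathrm{mid}(P_{\vec{s}})+W\cdot N_1(\vec{s})$, whereas the displayed definition in the paper has a minus sign; that sign is evidently a typo, since the cancellation computed in the proof of Claim~\ref{clm:t_max<p} explicitly uses $-t_g[\vec{s}]=-\mathrm{mid}(P_{\vec{s}})-W\cdot N_1(\vec{s})$, and your parenthetical remark shows you inferred the intended sign from that cancellation rather than from the stated definition.
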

\begin{proof}
	Since $p \inp{a}{b} < t_{\max} \le \mathrm{mid}({P_g})$, $g_\epsilon \inp{a}{b} = 0$, which is implied by a similar argument to Claim~\ref{clm:t_max<p}, all the gates $g_\vec{s}$ such that $\vec{s}$ contains a symbol 1. If $\vec{s}$ consists of only 0s, we have $p\inp{a}{b} < t_{\max} \le \mathrm{mid}(Q)$ for any $Q \subseteq P_g$, which implies that $g_\vec{s} \inp{a}{b} = 0$. 
\end{proof}

Claims~\ref{clm:t_max<p} and~\ref{clm:p<t_max} imply that we can safely replace $g$ with $g_\vec{s}$s, $\vec{s} \in S_g \cup L_g$, by connecting the output of each gate $g_{\vec{s}}$, $\vec{s} \in L_g$, with weight $w_{g, g'} \cdot (\delta \circ \af(p))$, where $p$ is the unique value in $P_\vec{s}$ to every gate $g'$ that the gate $g$ is originally connected to in $C$. Since $|S_g \cup L_g| = O((s+n)w)$ and the length of $\vec{s}$ is $O(\log (s+n) + \log w)$, the size and depth  increase by these factors, respectively.

We can construct another set of threshold gates for $(-\infty, t_{\min}]$ in a similar manner to the above procedure That is, it suffices to consider a gate obtained by multiplying $-1$ by the weights and threshold of $g$ together with an interval $[-t_{\min}, \infty)$.

Using the above procedure, we replace every gate $g \in G\backslash \{\gtop\}$ in $C$ with a set of threshold gates, and complete the construction of $C'$. Clearly, the size of $C'$ is $O(s\cdot (s+n)w)$. Since $g_\vec{s}$ receives the outputs of $g_{\vec{t}}$ for every $\vec{t} \prec \vec{s}$ and the length of $\vec{s}$ is $O(\log (s+n) + \log w)$, the depth of $C'$ is $O(d \cdot (\log (s+n) + \log w))$. Claims~\ref{clm:t_max<p} and~\ref{clm:p<t_max} imply that the energy of $C'$ is $O(e \cdot (\log(s+n) + \log w))$. Clearly, the weight of $C'$ is $W = O((s+n)w)$. Thus, Theorem~\ref{thm:LB_size} implies that $rk(M_C)$ is bounded by
\begin{eqnarray*}
	&& O(e(\log(s+n) + \log w) \! \cdot \! d(\log (s+n) + \log w) \! \cdot \! (\log s + \log w + \log n)) \\
	&=& O(ed(\log s + \log n + \log w)^3),
\end{eqnarray*}
as desired.

\end{document}